\newcommand{\be}{\begin{equation}}
\newcommand{\ee}{\end{equation}}
\newcommand{\dis}{\displaystyle}
\newtheorem{thm}{Theorem}[section]
\newtheorem{prop}[thm]{Proposition}
\newtheorem{theorem}[thm]{Theorem}
\newtheorem{lemma}[thm]{Lemma}
\newtheorem{dfn}[thm]{Definition}
\newtheorem{remark}[thm]{\it Remark}
\newtheorem{example}[thm]{\it Example}
\begin{document}

\title{Integrable multi-component difference systems of  equations}

\author{Pavlos Kassotakis }
\address{Department of Mathematics and Statistics, University of Cyprus, P.O Box: 20537, 1678 Nicosia, Cyprus;}
\curraddr{}
\email{pavlos1978@gmail.com, pkasso01@ucy.ac.cy}
\thanks{}

\author{Maciej Nieszporski }
\address{Katedra Metod Matematycznych Fizyki, Wydzia\l{} Fizyki, Uniwersytet Warszawski,
ul. Pasteura 5, 02-093 Warszawa, Poland;}
\curraddr{}
\email{maciejun@fuw.edu.pl}
\thanks{}

\author{Vassilios  Papageorgiou }
\address{Department of Mathematics, University of Patras, 26 500 Patras, Greece;}
\curraddr{}
\email{vassilis@math.upatras.gr}
\thanks{}
\author{Anastasios Tongas}
\address{Department of Mathematics, University of Patras, 26 500 Patras, Greece;}
\curraddr{}
\email{tasos@math.upatras.gr}
\thanks{}

\maketitle
\begin{abstract}
We present two lists 
of multi-component systems of integrable difference equations defined on the edges of a $\mathbb{Z}^2$ graph. The integrability of these  systems is manifested by their Lax formulation which is a consequence of the multi-dimensional compatibility  of these systems. Imposing constraints consistent with the systems of difference equations, we recover known integrable quad-equations including the discrete version of the Krichever-Novikov  equation.
 The systems of difference equations allow us for a straightforward reformulation as Yang-Baxter maps. Certain two-component  systems of equation defined on the vertices of a $\mathbb{Z}^2$ lattice, their non-potential form and  integrable equations defined on 5-point stencils, are also obtained.

 \begin{center}
\emph{Dedicated to Allan Fordy on the occasion of his 70th birthday.}
    \end{center}
\end{abstract}

\setcounter{tocdepth}{2}

\tableofcontents


\section{Introduction}

Among the nonlinear difference equations, of particular interest are the ones that admit Lax pair formulation,
B\"acklund transformation, nonlinear superposition principle and other mutually related features that allow one to call them integrable \cite{Franks-book}.

Nonlinear integrable discrete equations, most often are defined on vertices of ${\mathbb Z}^2$ and ${\mathbb Z}^3$ lattices. For instance in \cite{ABS} one can find a list of such  integrable equations defined on the vertices of elementary quadrangle  of a $\mathbb{Z}^2$ lattice.
Using the shorthand notation $x_{12}:=x(m+1,n+1)$, $x_{1}:=x(m+1,n)$, $x_{2}:=x(m,n+1)$, $x:=x(m,n)$
the list can be written as
\begin{align*}
\begin{aligned}
p(xx_1+x_2x_{12})-q(xx_2+x_1x_{12})-\frac{pQ-Pq}{1-p^2q^2}\left(xx_{12}+x_1x_2-pq(1+xx_1x_2x_{12})\right)=0,
\end{aligned}&& (Q4)\\
\begin{aligned}
\left(p-\frac{1}{p}\right)(xx_1+x_2x_{12})-\left(q-\frac{1}{q}\right)(xx_2+x_1x_{12})-\left(\frac{p}{q}-\frac{q}{p}\right)\left(x_1x_2+xx_{12}\right)\\
-\delta\left(\frac{p}{q}-\frac{q}{p}\right)\left(p-\frac{1}{p}\right)\left(q-\frac{1}{q}\right)=0,
\end{aligned}&& (Q3^{\delta})\\
\begin{aligned}
p(x-x_2)(x_1-x_{12})-q(x-x_1)(x_2-x_{12})+pq(p-q)(x+x_1+x_2+x_{12}-p^2+pq-q^2)=0,
\end{aligned}&& (Q2)\\
\begin{aligned}
p(x-x_2)(x_1-x_{12})-q(x-x_1)(x_2-x_{12})+\delta pq(p-q)=0,
\end{aligned}&& (Q1^{\delta})\\
\begin{aligned}
\left(p-\frac{1}{p}\right)(xx_2+x_1x_{12})-\left(q-\frac{1}{q}\right)(xx_1+x_2x_{12})-\left(\frac{p}{q}-\frac{q}{p}\right)(1+xx_1x_2x_{12})=0,
\end{aligned}&& (A2)\\
\begin{aligned}
p(x+x_2)(x_1+x_{12})-q(x+x_1)(x_2+x_{12})+\delta pq(p-q)=0,
\end{aligned}&& (A1^{\delta})\\
\begin{aligned}
p(xx_1+x_2x_{12})-q(xx_2+x_1x_{12})+\delta (p^2-q^2)=0,
\end{aligned}&& (H3^{\delta})\\
\begin{aligned}
(x-x_{12})(x_1-x_2)-(p-q)(x+x_1+x_2+x_{12}+p+q)=0,
\end{aligned}&& (H2)\\
\begin{aligned}
(x-x_{12})(x_1-x_2)+p-q=0,
\end{aligned}&& (H1)
\end{align*}
where $\delta=0,1,$ $p$ is prescribed function of $m$, $q$ is prescribed function of $n$ and $P^2=p^4-\gamma p^2+1$, $Q^2=q^4-\gamma q^2+1$, so $(p,P)$ and $(q,Q)$ are points on the elliptic curve $\mathcal{E}=\{(\alpha,A)\in\mathbb{C}^2:A^2=\alpha^4-\gamma\alpha^2+1\},$ where $\gamma$ the modulus of $\mathcal{E}$. The top equation, $Q4$ serves as the discrete version of Krichever-Novikov equation and it was obtained in \cite{Adler:1998}. Here we have presented $Q4$ in its Jacobi form that was introduced in \cite{Hietarinta:Q4}.
Equations $Q1^{\delta}$ and $A1^{\delta}$ are  equivalent under a point transformations, so do $A2$ and $Q3^{0}$.


The six bottom members of the list, namely $(Q1^{\delta},A2,A1^{\delta},H3^{\delta},H2,H1)$ admit a Lie point symmetry group. The existence of the Lie point symmetry group allows one to reformulate these six  members of the list as systems of difference equations defined on the edges of a quadrangle  of the $\mathbb{Z}^2$ graph \cite{Tasos}, we refer to these systems shortly as {\it bond systems} \cite{hi-via,KaNie,KaNie1,KaNie3,KaNie:2018}. The hallmark of integrability of the bond systems is the 3-dimensional compatibility in analogy with the consistency-around-the-cube \cite{FN} of the corresponding quad-equations. In turn, the 3-dimensional compatibility of a bond system, implies $n-$dimensional (multi-dimensional) compatibility of the same system \cite{ABSf}.

The top three equations of the list, namely  $Q4,$ $Q3^{\delta}$ and $Q2$ do not admit any Lie-point symmetry.
In this manuscript, we propose a procedure that allows us to obtain from any member of the list   a multi-dimensional compatible (see Section \ref{Section:2})  multi-component system of difference equations defined on edges of a $\mathbb{Z}^2$ graph. Moreover, we  associate a multi-field {\it quadrirational}\footnote{The notion of quadrirational maps was introduced in \cite{et-2003,ABSf}. This notion can be extended to $n-$dimensions to the notion of {\it $2^n-$rational maps}  \cite{2n-rat}. } Yang-Baxter map with each one member of the bond systems. Also, we recover the original list  of equations $(Q4,\ldots, H1)$ from certain  reductions of these novel systems.

 It is worth mentioning that in  \cite{Tasos:ell} systems of difference equations  were associated with all the members of the list but these systems turned out to be too restrictive i.e. there was a restrictive set of initial values for which the solution could emerge. Moreover, the Yang-Baxter maps associated with the systems introduced in \cite{Tasos:ell}, turned out to be non-quadrirational as a consequence of the restrictive initial value problem. The paper is organized as follows.

In Section \ref{Section:2} we set the notation we are using through out the manuscript.  In Section \ref{Section:3} we present as a motivating example, the lattice potential KdV as a difference system in bond variables.  Two lists, of seven members each, of integrable difference systems in bond variables, are presented in Section \ref{Section:4}. We refer to these lists as the $m-$list and the $a-$list, respectively, and by imposing constraints which are consistent with the members of the lists, we obtain as reductions well known integrable quad-equations, including the discrete version of the Krichever-Novikov equation. The discrete zero curvature representation of all members of the $m-$list and the $a-$list is obtained  in Section \ref{Section:5}. In Section \ref{Section:6}, for some members of the $m-$list and the $a-$list, we obtain the associated systems of vertex equations, which are two-component integrable  systems defined on vertices of a $\mathbb{Z}^2$ lattice. Moreover, we obtain the non-potential form of these vertex systems of equations together with some integrable equations defined on 5-point stencils. The final Section 7 contains a brief discussion of the results obtained in this paper and some open questions that could be addressed for a future study. There are three appendices to this paper regarding the reinterpretation of the derived systems as Yang-Baxter maps. Appendices A and B contain the quadrirational Yang-Baxter maps obtained from the $m-$list and a-list, respectively. Appendix C contains a proof and a technical result of the multi-dimensional consistency property, using the $n$-factorization property satisfying the Lax matrices of the corresponding Yang-Baxter maps.


\section{Notation} \label{Section:2}

We consider the following sets which are associated with the $\mathbb{Z}^2$ lattice (see Figure \ref{fig00}):
\begin{itemize}
\item the set of vertices $V=\left\{(m,n) \, | \,  m,n \in {\mathbb Z} \right\},$  which is the original ${\mathbb Z}^2$ lattice,
\item the set of horizontal edges $E_h= \left\{ \{(m,n),(m+1,n)\} \, | \,  m,n \in {\mathbb Z} \right\},$
\item the set of vertical edges  $E_v= \left\{ \{(m,n),(m,n+1)\} \, | \,  m,n \in {\mathbb Z} \right\}.$
\end{itemize}
\begin{figure}[h]
\begin{minipage}[h]{0.3\textwidth}
\begin{tikzpicture}
\draw[dashed] (-0.5,0)--(2.5,0);
\draw[dashed] (-0.5,1)--(2.5,1);
\draw[dashed] (-0.5,2)--(2.5,2);
\draw[dashed] (2,-0.5)--(2,2.5);
\draw[dashed] (1,-0.5)--(1,2.5);
\draw[dashed] (0,-0.5)--(0,2.5);
\filldraw
(0,0) circle (2pt) (1,0) circle (2pt) (2,0) circle (2pt) (0,1) circle (2pt) (0,2) circle (2pt) (1,2) circle (2pt) (1,1) circle (2pt) (2,1) circle (2pt) (2,2) circle (2pt);
\end{tikzpicture}
\end{minipage}
\begin{minipage}[h]{0.3\textwidth}
\begin{tikzpicture}
\draw[dashed] (-0.5,0)--(2.5,0);
\draw[dashed] (-0.5,1)--(2.5,1);
\draw[dashed] (-0.5,2)--(2.5,2);
\draw[dashed] (2,-0.5)--(2,2.5);
\draw[dashed] (1,-0.5)--(1,2.5);
\draw[dashed] (0,-0.5)--(0,2.5);
\draw[ultra thick] (-0.5,0)--(-0.1,0);
\draw[ultra thick] (0.1,0)--(0.9,0);
\draw[ultra thick] (1.1,0)--(1.9,0);
\draw[ultra thick] (2.1,0)--(2.5,0);
\draw[ultra thick] (-0.5,1)--(-0.1,1);
\draw[ultra thick] (0.1,1)--(0.9,1);
\draw[ultra thick] (1.1,1)--(1.9,1);
\draw[ultra thick] (2.1,1)--(2.5,1);
\draw[ultra thick] (-0.5,2)--(-0.1,2);
\draw[ultra thick] (0.1,2)--(0.9,2);
\draw[ultra thick] (1.1,2)--(1.9,2);
\draw[ultra thick] (2.1,2)--(2.5,2);
\end{tikzpicture}
\end{minipage}
\begin{minipage}[h]{0.3\textwidth}
\begin{tikzpicture}
\draw[dashed] (-0.5,0)--(2.5,0);
\draw[dashed] (-0.5,1)--(2.5,1);
\draw[dashed] (-0.5,2)--(2.5,2);
\draw[dashed] (2,-0.5)--(2,2.5);
\draw[dashed] (1,-0.5)--(1,2.5);
\draw[dashed] (0,-0.5)--(0,2.5);
\draw[ultra thick] (0,-0.5)--(0,-0.1);
\draw[ultra thick] (0,0.1)--(0,0.9);
\draw[ultra thick] (0,1.1)--(0,1.9);
\draw[ultra thick] (0,2.1)--(0,2.5);
\draw[ultra thick] (1,-0.5)--(1,-0.1);
\draw[ultra thick] (1,0.1)--(1,0.9);
\draw[ultra thick] (1,1.1)--(1,1.9);
\draw[ultra thick] (1,2.1)--(1,2.5);
\draw[ultra thick] (2,-0.5)--(2,-0.1);
\draw[ultra thick] (2,0.1)--(2,0.9);
\draw[ultra thick] (2,1.1)--(2,1.9);
\draw[ultra thick] (2,2.1)--(2,2.5);
\end{tikzpicture}
\end{minipage}
\caption{Sets associated with the ${\mathbb Z}^2-$lattice}
\label{fig00}
\end{figure}
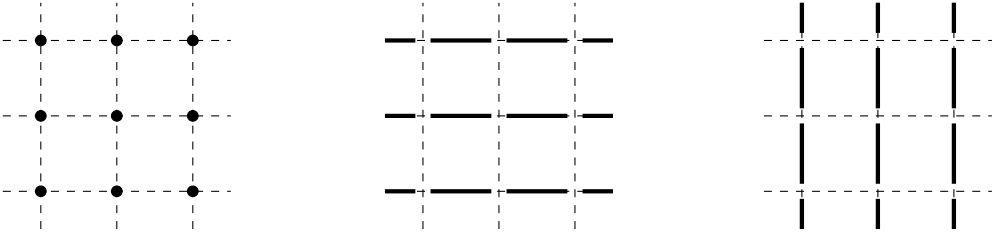
The set $V\cup E_h\cup E_v$ constitutes the $\mathbb{Z}^2$ graph.

 We label with $(m,n) \in {\mathbb Z}^2$,
$(m_1,n_1) \in {\mathbb Z}^2$ and $(m_2,n_2) \in {\mathbb Z}^2$   the elements of $V$, $E_h$ and $E_v$  respectively. Therefore, each of this sets can be regarded as a ${\mathbb Z}^2$ lattice itself, where we use the convention  to denote the edge $\{(0,0),(1,0)\}$ as $(m_1,n_1)=(0,0),$ the edge $\{(0,0),(0,1)\}$ as $(m_2,n_2)=(0,0)$.

We consider the functions $x,z: V \rightarrow \mathbb{C},$ as well as  $u,s: E_h\rightarrow \mathbb{C}$ and finally $v,t: E_v\rightarrow \mathbb{C}.$ Therefore,
$(m,n) \mapsto x_{m,n},$  $(m,n) \mapsto z_{m,n},$ $(m_1,n_1) \mapsto u_{m_1,n_1},$ $(m_1,n_1) \mapsto s_{m_1,n_1},$ $(m_2,n_2) \mapsto v_{m_2,n_2}$ and finally $(m_2,n_2) \mapsto t_{m_2,n_2}$. We also consider the functions $p: E_h\ni(m_1,n_1)\mapsto p_{m_1}\in\mathbb{C}$ and $q: E_v\ni(m_2,n_2)\mapsto q_{n_2}\in\mathbb{C}$. Moreover, we introduce a concise notation, first, by considering  functions $X^i, Y^i, p^i, i=1,2$ such that
 $X^1_{m_1,n_1}:=u_{m_1,n_1},$ $X^2_{m_2,n_2}:=v_{m_2,n_2},$ $p^1_{m_1}:=p_{m_1},$  $Y^1_{m_1,n_1}:=s_{m_1,n_1},$ $Y^2_{m_2,n_2}:=t_{m_2,n_2},$ and $p^2_{n_2}:=q_{n_2}$, second,  we omit the dependency of the functions $X^i, Y^i, p^i, i=1,2$ on the independent variables, third, we denote the shifts in the corresponding variables as subscripts i.e. $X^i_1=X^i_{m_i+1,n_i}, X^i_2=X^i_{m_i,n_i+1}, X^i_{11}=X^i_{m_i+2,n_i},$ $X^i_{-1}=X^i_{m_i-1,n_i},$ $X^i_{-2}=X^i_{m_i,n_i-1},$  $X^i_{-12}=X^i_{m_i-1,n_i+1}$ etc.
 Therefore with  this concise notation we enumerate functions with superscripts and the shifts with subscripts, see Figure \ref{notation1}.
\begin{figure}[h]
\begin{minipage}[h]{0.45\textwidth}
\begin{tikzpicture}[fill=white, scale=2, every node/.style={transform shape}];
\draw[ultra thick,fill=white] (-0.5,0)--(-0.2,0)  (0.5,0) node [scale=0.3, below ] { $\left(\begin{array}{c}
                                                                                 u_{m_1,n_1}\\
                                                                                 s_{m_1,n_1}
                                                                                 \end{array}\right)$}     (0.5,0) node [scale=0.4, above ] { $ p_{m_1}$};

\draw[ultra thick,fill=white] (0.2,0)--(0.8,0);
\draw[ultra thick,fill=white] (1.2,0)--(1.8,0)    (1.5,0) node [scale=0.3, below ] { $ \left(\begin{array}{c}
                                                                                 u_{m_1+1,n_1}\\
                                                                                 s_{m_1+1,n_1}
                                                                                 \end{array}\right)$} (1.5,0) node [scale=0.4, above ] { $ p_{m_1+1}$};
\draw[ultra thick,fill=white] (2.2,0)--(2.5,0);
\draw[ultra thick,fill=white] (-0.5,1)--(-0.2,1)  (0.5,1) node [scale=0.3, below ] { $ \left(\begin{array}{c}
                                                                                 u_{m_1,n_1+1}\\
                                                                                 s_{m_1,n_1+1}
                                                                                 \end{array}\right)$} (0.5,1) node [scale=0.4, above ] { $ p_{m_1}$};
\draw[ultra thick,fill=white] (0.2,1)--(0.8,1);
\draw[ultra thick,fill=white] (1.2,1)--(1.8,1)    (1.5,1) node [scale=0.3, below ] { $ \left(\begin{array}{c}
                                                                                 u_{m_1+1,n_1+1}\\
                                                                                 s_{m_1+1,n_1+1}
                                                                                 \end{array}\right)$}  (1.5,1) node [scale=0.4, above ] { $ p_{m_1+1}$};;
\draw[ultra thick,fill=white] (2.2,1)--(2.5,1);
\draw[ultra thick,fill=white] (-0.5,2)--(-0.2,2);
\draw[ultra thick] (0.2,2)--(0.8,2);
\draw[ultra thick] (1.2,2)--(1.8,2);
\draw[ultra thick] (2.2,2)--(2.5,2);
\draw  (0,0) node [scale=0.2 ] { $ (x_{m,n},z_{m,n})$};
\draw    (1,0) node [ fill=white, scale=0.2] { $(x_{m+1,n},z_{m+1,n})$};
\draw    (2,0) node [ fill=white, scale=0.2]  { $(x_{m+2,n},z_{m+2,n})$};
\draw    (0,1) node [ fill=white, scale=0.2 ] {$(x_{m,n+1},z_{m,n+1})$};
\draw    (0,2) node [fill=white, scale=0.2 ] {  $(x_{m,n+2},z_{m,n+2})$};
\draw    (1,1) node [ fill=white, scale=0.2 ] { $(x_{m+1,n+1},z_{m+1,n+1})$};

\draw[ultra thick] (0,-0.5)--(0,-0.2)   (-0.2,0.7) node [scale=0.3, left, rotate=90 ] {$\left(\begin{array}{c}
                                                                                 v_{m_2,n_2}\\
                                                                                 t_{m_2,n_2}
                                                                                 \end{array}\right)$}  (0.87,0.75) node [scale=0.25, left, rotate=90 ] {$\left(\begin{array}{c}
                                                                                 v_{m_2+1,n_2}\\
                                                                                 t_{m_2+1,n_2}
                                                                                 \end{array}\right)$} (0.08,0.3) node [scale=0.4, right, rotate=90 ] {$q_{n_2}$} (1.08,0.3) node [scale=0.4, right, rotate=90 ] {$q_{n_2}$} (0.08,1.25) node [scale=0.4, right, rotate=90 ] {$q_{n_2+1}$} (1.08,1.25) node [scale=0.4, right, rotate=90 ] {$q_{n_2+1}$};
\draw[ultra thick] (0,0.1)--(0,0.8)  (-0.2,1.8) node [scale=0.3, left, rotate=90 ] {$\left(\begin{array}{c}
                                                                                 v_{m_2,n_2+1}\\
                                                                                 t_{m_2,n_2+1}
                                                                                 \end{array}\right)$}  (0.87,1.9) node [scale=0.25, left, rotate=90 ] {$\left(\begin{array}{c}
                                                                                 v_{m_2+1,n_2+1}\\
                                                                                 t_{m_2+1,n_2+1}
                                                                                 \end{array}\right)$} ;
\draw[ultra thick] (0,1.2)--(0,1.8);
\draw[ultra thick] (0,2.2)--(0,2.5);
\draw[ultra thick] (1,-0.5)--(1,-0.2);
\draw[ultra thick] (1,0.2)--(1,0.8);
\draw[ultra thick] (1,1.2)--(1,1.8);
\draw[ultra thick] (1,2.2)--(1,2.5);
\draw[ultra thick] (2,-0.5)--(2,-0.2);
\draw[ultra thick] (2,0.2)--(2,0.8);
\draw[ultra thick] (2,1.2)--(2,1.8);
\draw[ultra thick] (2,2.2)--(2,2.5);
%
\end{tikzpicture}
\captionsetup{font=footnotesize}
\captionof*{figure}{(a)}
\end{minipage}
\begin{minipage}[h]{0.45\textwidth}
\begin{tikzpicture}[scale=2, every node/.style={transform shape}];
\draw  (0,0) node [scale=0.2 ] { $ (x,z)$};
\draw    (1,0) node [scale=0.2 ] { $(x_1,z_1)$};
\draw    (2,0) node [scale=0.2]  { $(x_{11},z_{11})$};
\draw    (0,1) node [scale=0.2 ] {$(x_2,z_2)$};
\draw    (0,2) node [scale=0.2 ] {  $(x_{22},z_{22})$};
\draw    (1,1) node [scale=0.2 ] { $(x_{12},z_{12})$};
\draw[ultra thick] (-0.5,0)--(-0.2,0)  (0.5,0) node [scale=0.3, below ] { $ (X^1,Y^1)$} node [scale=0.4, above ] { $ p^1$};
\draw[ultra thick] (0.2,0)--(0.8,0);
\draw[ultra thick] (1.2,0)--(1.8,0)    (1.5,0) node [scale=0.3, below ] { $ (X^1_1,Y^1_1)$} node [scale=0.4, above ] { $ p^1_1$};
\draw[ultra thick] (2.2,0)--(2.5,0);
\draw[ultra thick] (-0.5,1)--(-0.2,1)  (0.5,1) node [scale=0.3, below ] { $ (X^1_2,Y^1_2)$} node [scale=0.4, above ] { $ p^1$} ;
\draw[ultra thick] (0.2,1)--(0.8,1);
\draw[ultra thick] (1.2,1)--(1.8,1)    (1.5,1) node [scale=0.3, below ] { $ (X^1_{12},Y^1_{12})$} node [scale=0.4, above ] { $ p^1_1$};
\draw[ultra thick] (2.2,1)--(2.5,1);
\draw[ultra thick] (-0.5,2)--(-0.2,2);
\draw[ultra thick] (0.2,2)--(0.8,2);
\draw[ultra thick] (1.2,2)--(1.8,2);
\draw[ultra thick] (2.2,2)--(2.5,2);
\draw[ultra thick] (0,-0.5)--(0,-0.2)   (-0.2,0.7) node [scale=0.34, left, rotate=90 ] {$(X^2,Y^2)$} (0.2,0.33) node [scale=0.4, right, rotate=90 ] {$p^2$}  (0.8,0.7) node [scale=0.34, left, rotate=90 ] {$(X^2_1,Y^2_1)$} (1.2,0.33) node [scale=0.4, right, rotate=90 ] {$p^2$} ;
\draw[ultra thick] (0,0.2)--(0,0.8)  (-0.2,1.7) node [scale=0.34, left, rotate=90 ] {$(X^2_2,Y^2_2)$} (0.2,1.3) node [scale=0.4, right, rotate=90 ] {$p^2_2$}  (0.8,1.7) node [scale=0.34, left, rotate=90 ] {$(X^2_{12},Y^2_{12})$} (1.2,1.4) node [scale=0.4, right, rotate=90 ] {$p^2_2$};
\draw[ultra thick] (0,1.2)--(0,1.8);
\draw[ultra thick] (0,2.2)--(0,2.5);
\draw[ultra thick] (1,-0.5)--(1,-0.2);
\draw[ultra thick] (1,0.2)--(1,0.8);
\draw[ultra thick] (1,1.2)--(1,1.8);
\draw[ultra thick] (1,2.2)--(1,2.5);
\draw[ultra thick] (2,-0.5)--(2,-0.2);
\draw[ultra thick] (2,0.2)--(2,0.8);
\draw[ultra thick] (2,1.2)--(2,1.8);
\draw[ultra thick] (2,2.2)--(2,2.5);
\end{tikzpicture}
\captionsetup{font=footnotesize}
\captionof*{figure}{(b)}
\end{minipage}
\caption{Fields on  the vertices and on the edges of the ${\mathbb Z}^2$ lattice. \newline (a): Standard notation. (b): concise notation used in the paper}\label{notation1}
\end{figure}

Having at hand the concise notation, we can now re-write the lattice potential KdV equation $(H1),$ in terms of the functions $(u,v,u_2,v_1)$ which are naturally defined on the edges of the $\mathbb{Z}^2$ graph.
So the lattice potential KdV
$$
(x_{12}-x)(x_1-x_2)=p-q,\quad (H1),
$$
in terms of the following  of functions
$$
\begin{array}{ll}
u:=x_1x,\;\;v:=x_2x,& \;\;u_2:=x_{12}x_2,\;\;v_1:=x_{12}x_1,
\end{array}
$$
 reads
\be  \label{F4-bs}
u_2=v\left(1+\frac{p-q}{u-v}\right),\quad v_1=u\left(1+\frac{p-q}{u-v}\right).
\ee
The functions $u,v,u_2,v_1$ are not independent since the relation $u u_2=v v_1$ holds.
After the identification $u=X^1, V=X^2, u_2=X^1_2, v_1=X^2_1, \; p=p^1,\; q=p^2,$ the bond system $(\ref{F4-bs})$ reads
$$
X^i_j=X^j\left(1+\frac{p^i-p^j}{X^i-X^j}\right),\;i\neq j\in\{1,2\}.
$$
The difference system in bond variables $(\ref{F4-bs}),$ was first considered in \cite{Tasos}.  The map $\phi: (u,v)\mapsto (u_2,v_1),$ where $u_2, v_1$ are given by (\ref{F4-bs}), is exactly the $F_{IV}$ quadrirational Yang-Baxter map \cite{ABSf}.  The following remarks are in order.
\begin{itemize}
\item The function $H=\frac{u}{v},$ is an alternating invariant  of the map $\phi$, i.e.  $H(u_2,v_1)=\frac{1}{H(u,v)}$ holds.
\item The system of difference equations $(\ref{F4-bs})$ together with the relation $u_2 u=v_1v,$ 
     leads to the $H1$ integrable lattice equation, for it guarantees the existence of a potential function $x$ such that $u=x_1x$ and $v=x_2x.$
\item    The system of difference equations $(\ref{F4-bs})$ together with the relation $u_2 u=v_1v,$ or  more surprisingly, even without this relation  is extendable to multi-dimensions i.e. the difference equations
    \be \label{f4-md}
    X^i_j=X^j\left(1+\frac{p^i-p^j}{X^i-X^j}\right),\;i\neq j\in\{1,2,\ldots, n\},
    \ee
    are compatible since $X^i_{jk}=X^i_{kj},\;i\neq j\neq k\in\{1,2,\ldots, n\}$ holds. The fact that $(\ref{F4-bs})$ is multi-dimensional compatible, even without the relation $u_2 u=v_1v,$ is an essential observation of this paper.
\end{itemize}

\section{ The lattice potential KdV equation  as a multi-component difference system in bond variables, a motivating example} \label{Section:3}


Motivated by the fact that $Q2, Q3^{\delta}$ and $Q4$ do not admit any Lie-point symmetry group, we investigate how $H1$ can be reformulated in terms of functions which are not invariants of any  Lie-point symmetry group of the latter. For example, in terms of $s:=\frac{x_1}{x},\; t:=\frac{x_2}{x},$ $H1$ reads:
\be \label{lpkdv-c}
x_2x(s_2-1/t)(s-t)=p-q,\quad x_1x(t_1-1/s)(s-t)=p-q.
\ee
We observe that the  functions $u=x_1x, \; v=x_2x,$ arise. Hence, in order to rewrite $H1$ in terms of the functions $s$ and $t,$  we are forced to consider the functions $u$ and $v$ and  we get the following system
\be  \label{F4-bs-c}
\begin{array}{l}
{\dis u_2=v\left(1+\frac{p-q}{u-v}\right),\quad v_1=u\left(1+\frac{p-q}{u-v}\right),} \\
{\dis s_2=\frac{1}{t}+\frac{p-q}{v(s-t)},\quad t_1=\frac{1}{s}+\frac{p-q}{u(s-t)}},
\end{array}
\ee
supplemented with the constraint $tu=sv$ which is  a consequence of the definition of the functions $u,s,v$ and $t$.
The following remarks are in order.
\begin{itemize}
\item  The relation $tu=sv,$ is {\it consistent} with the system of difference equations  $(\ref{F4-bs-c})$, i.e. if we impose a staircase initial value problem that satisfies the conditions $tu=sv,$ and $s_2u_2=t_1v_1$ on the staircase, then  due to equations $(\ref{F4-bs-c})$ the constraints hold on the whole lattice.
\item The system difference equations $(\ref{F4-bs-c})$ together with the relation $tu=sv,$ 
    leads to the $H1$ integrable lattice equation.
\item  The system difference equations $(\ref{F4-bs-c})$  without the relation $tu=sv,$  is not   multi-dimensional compatible.
\end{itemize}
In order to achieve multidimensional compatibility, we use the relation $tu=sv$ and we re-write (\ref{F4-bs-c}) s.t. $u_2$ to be a function of the variables $s,v,t$ only, $v_1$ to be a function of  $u,s,t$ only, $s_2$ to be a function of $u,v,t$ only, and $t_1$ to be a function of $u,s,v$ only,  to obtain
$$
\begin{array}{ll}
\begin{array}{l}
{\dis u_2=v+t\frac{p-q}{s-t},\quad v_1=u+s\frac{p-q}{s-t},} \\
{\dis s_2=\frac{1}{t}+\frac{p-q}{t(u-v)},\quad t_1=\frac{1}{s}+\frac{p-q}{s(u-v)}}.
\end{array}& (m\mathcal{H}1)
\end{array}
$$
$m\mathcal{H}1$ as it is shown in what follows,  is   multi-dimensional compatible as it stands (without using the relation $tu=sv$).

To recapitulate,
we consider the set of functions
$$
u:=x_1 x,\quad v:=x_2 x, \quad s:=x_1/x,\quad t:=x_2/x,
$$
as well as
$$
u_2:=x_{12} x_2,\quad v_1:=x_{12} x_1, \quad s_2:=x_{12}/x_2,\quad t_1:=x_{12}/x_1.
$$
 $H1$ can be written in a unique way at a time in terms of $(u_2,v,s,t),$ of $(v_1,u,s,t),$ of $(s_2,u,v,t)$ and in terms of $(t_1,u,v,s),$ respectively and leads exactly to $m\mathcal{H}1$.

\begin{prop}
The system of difference equations $m\mathcal{H}1$:
\begin{enumerate}
\item satisfies: 
\begin{itemize}
                                                 \item $\frac{s_2}{t_1}=\frac{s}{t},$ \\
                                                 \item  $u_2-v_1+\frac{p-q}{2}=-(u-v+\frac{p-q}{2})$
                                                 \end{itemize}
\item the relation $tu=sv$ holds if and only if $u_2s_2=t_1v_1$ holds
\item it  can be extended to multi-dimensions as follows:
\be \label{multi-f4}
X^i_j=X^j+Y^j\frac{p^i-p^j}{Y^i-Y^j},\quad Y^i_j=\frac{1}{Y^j}+\frac{1}{Y^j}\frac{p^i-p^j}{X^i-X^j},\;\;i\neq j\in \{1,\ldots,n\}.
\ee
where the compatibility conditions $X^i_{jk}=X^i_{kj}, \; Y^i_{jk}=Y^i_{kj}$ holds
\item it arises as the compatibility condition
\be \label{comex}
L({u_2,s_2};p,\lambda)\, L({v,t};q,\lambda) =L(v_1,t_1;q,\lambda)\, L({u,s};p,\lambda)
\ee
 of the linear system
$$
\Psi_1=L({u,s};p,\lambda)\Psi,\quad \Psi_2=L({v,t};q,\lambda) \Psi,
$$
where
$$
L(u,s;p,\lambda) =
\begin{pmatrix}
    0&0& -u  &s(u+p-\lambda)              \\
    0&0& -1 & s             \\
    -1 &u+p-\lambda & 0 &    0          \\
     -s& -s u& 0 &  0
\end{pmatrix},
$$
\item the companion map $\phi^c: (u,s,v_1,t_1)\mapsto (u_2,s_2,v,t)$ is a Yang-Baxter map.
\end{enumerate}
\end{prop}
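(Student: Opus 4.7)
My plan is to treat the five assertions in order, since each subsequent item uses the formulas established in the previous one.

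For item (1), both identities reduce to immediate substitution. Taking the ratio of the last two equations of $m\mathcal{H}1$ cancels the common factor $1+(p-q)/(u-v)$, leaving $s_2/t_1=s/t$. For the additive identity, I would compute $u_2-v_1 = (v-u)+(t-s)(p-q)/(s-t) = -(u-v)-(p-q)$, and then rearrange symmetrically about $-(p-q)/2$.

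For item (2), I would substitute the expressions for $u_2, s_2, v_1, t_1$ into the combination $u_2s_2-t_1v_1$. The factor $(u-v+p-q)/(u-v)$ is common to both $u_2s_2$ and $t_1v_1$, so after pulling it out one is left with $\bigl(v(s-t)+t(p-q)\bigr)/t$ versus $\bigl(u(s-t)+s(p-q)\bigr)/s$; clearing denominators this reduces to $(s-t)(sv-tu)=0$, giving the stated equivalence in the generic case $s\neq t$.

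For item (3) I would verify $X^i_{jk}=X^i_{kj}$ and $Y^i_{jk}=Y^i_{kj}$ for three distinct indices $i,j,k$, which is the main computational obstacle. Rewriting the first equation as $X^i_j=\bigl(X^j(Y^i-Y^j)+Y^j(p^i-p^j)\bigr)/(Y^i-Y^j)$ and the second as $Y^i_j=\bigl(X^i-X^j+p^i-p^j\bigr)/\bigl(Y^j(X^i-X^j)\bigr)$, one applies the shift in direction $k$ to $X^i_j$ by substituting the explicit formulas for $X^i_k, X^j_k, Y^i_k, Y^j_k$. The resulting rational expression in $X^1,X^2,X^3,Y^1,Y^2,Y^3,p^1,p^2,p^3$ must be manifestly symmetric in the indices $j,k$. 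I expect the simplifications to mirror those of the $F_{IV}$ map, with the $X$--relation collapsing once the common factor $(Y^i-Y^j)(Y^i-Y^k)(Y^j-Y^k)$ is extracted, and the $Y$--relation collapsing analogously using $(X^i-X^j)(X^i-X^k)(X^j-X^k)$; this is where I expect the heaviest bookkeeping and the only real obstacle in the proposition.

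For item (4), I would compute both sides of the $4\times 4$ matrix product in block form. Observing that $L(u,s;p,\lambda)$ has the anti-block structure $\bigl(\begin{smallmatrix}0&A\\ B&0\end{smallmatrix}\bigr)$ with $2\times 2$ blocks $A(u,s;p,\lambda)=\bigl(\begin{smallmatrix}-u & s(u+p-\lambda)\\ -1 & s\end{smallmatrix}\bigr)$ and $B(u,s;p,\lambda)=\bigl(\begin{smallmatrix}-1 & u+p-\lambda\\ -s & -su\end{smallmatrix}\bigr)$, the product $L(u_2,s_2;p,\lambda)L(v,t;q,\lambda)$ becomes block-diagonal with diagonal blocks $A(u_2,s_2;p,\lambda)B(v,t;q,\lambda)$ and $B(u_2,s_2;p,\lambda)A(v,t;q,\lambda)$. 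Then the Lax equation reduces to two $2\times 2$ matrix identities which I would expand entrywise and verify using $m\mathcal{H}1$; each entry yields either a trivial identity or precisely one of the four equations of $m\mathcal{H}1$ (and possibly the identity from item (1)). Finally for item (5), once multi-dimensional consistency is established in item (3), the companion map $\phi^c:(u,s,v_1,t_1)\mapsto(u_2,s_2,v,t)$ arises by reading the equations with the roles of output and input edges swapped; the Yang-Baxter property then follows by the standard argument that multi-dimensional consistency on the cube translates into the braid relation on $\mathbb{C}^2\times\mathbb{C}^2\times\mathbb{C}^2$ for the companion map, exactly as in the construction of $F_{IV}$ from (\ref{F4-bs}).
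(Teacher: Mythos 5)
Your treatment of items (1), (2) and (5) matches the paper's, which simply declares (1)--(2) to be direct calculations and derives (5) from multidimensional consistency as in Adler--Bobenko--Suris; and your plan for item (3) --- substitute the elementary shifts into $X^i_{jk}$, $Y^i_{jk}$ and check symmetry in $j,k$ --- is exactly what the paper does, except that the paper goes one step further and records the resulting closed-form expressions for $X^1_{23}$ and $Y^1_{23}$, which are visibly symmetric under $2\leftrightarrow 3$. (Those expressions mix differences of the $X$'s and of the $Y$'s, so the factorization is less cleanly separated between the two families than your sketch anticipates, but the strategy is the same.) For item (4) you genuinely diverge: you propose a direct block-entrywise verification of the $4\times 4$ Lax equation, whereas the paper argues structurally --- it observes that $m\mathcal{H}1$ has the special form (\ref{sys-gen}), writes the update as a projective action $V_1=L(u,s,p;q)[V]$, and invokes the standard result that for a 3D-compatible system the matrix of that projective action, with the second lattice parameter promoted to a spectral parameter, is automatically a Lax matrix. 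Your route is more elementary and self-contained; the paper's buys uniformity over all members of the two lists at the cost of citing the general theorem.

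One concrete warning about your item (4): if you carry out the entrywise check with the matrix exactly as printed, it fails. The $(2,2)$ entry of the block identity $A(u_2,s_2;p,\lambda)B(v,t;q,\lambda)=A(v_1,t_1;q,\lambda)B(u,s;p,\lambda)$ would force $u-v+p-q=0$, which is not a consequence of $m\mathcal{H}1$. The culprit is the $(4,2)$ entry of $L$: reading $L$ off the projective action (row 4 is the denominator of $t_1=\frac{u-v+p-q}{s(u-v)}$ viewed as a function of $v$) gives the row $(-s,\,su,\,0,\,0)$ rather than $(-s,\,-su,\,0,\,0)$; with $+su$ in place every entry of both $2\times 2$ identities reduces to a consequence of $m\mathcal{H}1$ and the two invariance conditions of item (1), exactly as you predicted. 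Finally, a small bookkeeping point in item (2): the exact identity is $u_2s_2-t_1v_1=\frac{(u-v+p-q)(sv-tu)}{(u-v)\,st}$, with no surviving factor of $s-t$, so the genericity needed for the ``only if'' direction is $u-v+p-q\neq 0$ rather than $s\neq t$.
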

\begin{proof}
\noindent The statements $(1)$ and $(2)$ can be proven by direct calculations.\\
\noindent To prove statement $(3)$   we observe  that the compatibility of the system (\ref{multi-f4}) is manifested by the formulae which is symmetric under the interchange $2$ to $3$
\begin{align*}
X^1_{23}=\frac{1}{\chi^{23}\psi^{23}}\left(X^1\psi^{23}(\chi^{23}+\pi^{23})-\pi^{23}(X^3Y^2-X^2Y^3)-\frac{\pi^{23}\chi^{12}\chi^{13}\psi^{23}(\chi^{23}+\pi^{23})}
{\pi^{23}X^1-p^1\chi^{23}+p^3X^2-p^2X^3}\right),\\
Y^1_{23}=\frac{\chi^{23}}{\chi^{23}+\pi^{23}}\left(Y^1-\frac{\pi^{23}\psi^{12}\psi^{13}}{\pi^{23}Y^1-p^1\psi^{23}+p^3Y^2-p^2Y^3}\right),
\end{align*}
where for the sake of brevity we introduced the notation $\chi^{lm}:=X^l-X^m,$ $\psi^{lm}:=Y^l-Y^m,$ $\pi^{lm}:=p^l-p^m,$ $l,m\in\{1,2,3\}.$\\
\noindent To prove $(4),$ one has to use similar arguments as in \cite{FrankABS2} where Lax representations where derived for 3D-compatible quad-equations. Specifically, $m\mathcal{H}1$ as well as all members of the $m-$list and the $a-$list (as we shall see in the next section), are of the following special form
\be \label{sys-gen}
\begin{array}{ll}
{\dis u_2=\frac{a^1(v,t)+a^2(v,t)s}{a^3(v,t)+a^4(v,t)s}},& {\dis s_2=\frac{b^1(v,t)+b^2(v,t)u}{b^3(v,t)+b^4(v,t)u},}\\ [3mm]
{\dis v_1=\frac{c^1(u,s)+c^2(u,s)t}{c^3(u,s)+c^4(u,s)t}},& {\dis t_1=\frac{d^1(u,s)+d^2(u,s)s}{d^3(u,s)+d^4(u,s)d}},
\end{array}
\ee
where $a^i,b^i, c^i,$ and $d^i,\; i=1,\ldots,4$ are functions of the indicated variables that might depend also on  $p$ and $q$. Moreover, it can be written as a projective action
\be\label{lax-rep}
 U_2= M(v,t,q;p)[ U],\quad  V_1= L(u,s,p;q)[ V],
\ee
where
$$
U=\begin{pmatrix}
u\\
1\\
s\\
1
\end{pmatrix},\quad V=\begin{pmatrix}
v\\
1\\
t\\
1
\end{pmatrix},\quad U_2=\begin{pmatrix}
u_2\\
1\\
s_2\\
1
\end{pmatrix},\quad V_1=\begin{pmatrix}
v_1\\
1\\
t_1\\
1
\end{pmatrix},
$$
and
$$
 M(v,t,q,p)=\begin{pmatrix}
                  0&0&\mathcal{B}^1 a^2&\mathcal{B}^1 a^1\\
                  0&0&\mathcal{B}^1 a^4&\mathcal{B}^1 a^3\\
                  \mathcal{B}^2 b^2&\mathcal{B}^2 b^1&0&0\\
                  \mathcal{B}^2 b^4&\mathcal{B}^2 b^3&0&0
                  \end{pmatrix}, \quad  L(u,s,p,q)=\begin{pmatrix}
                  0&0&\mathcal{A}^1 c^2&\mathcal{A}^1 c^1\\
                  0&0&\mathcal{A}^1 c^4&\mathcal{A}^1 c^3\\
                  \mathcal{A}^2 d^2&\mathcal{A}^2 d^1&0&0\\
                  \mathcal{A}^2 d^4&\mathcal{A}^2 d^3&0&0
                  \end{pmatrix},
$$
with $\mathcal{A}^1,\mathcal{A}^2,\mathcal{B}^1$  and $\mathcal{B}^2$ appropriate gauges. Due to the 3D-compatibility of the  system (\ref{sys-gen}),  $M^T(v,t,q;\lambda)$ or $L(u,s,p;\lambda)$ serve as Lax matrices  for the system of difference equations (\ref{sys-gen}) \cite{Bobenko2002,Veselov:2003b}.
\\
\noindent Fact $(5)$ is a consequence of the compatibility conditions $X^i_{jk}=X^i_{kj}, \; Y^i_{jk}=Y^i_{kj},$ for  the map $\phi^c$ is Yang-Baxter (see \cite{ABSf}).
\end{proof}
The system $m\mathcal{H}1$ defines a map $\phi: \mathbb{R}^4\ni (u,s,v,t)\mapsto (u_2,s_2,v_1,t_1)\in\mathbb{R}^4.$
\begin{prop} \label{prop2}
For the map $\phi$ it holds
\begin{enumerate}
\item it is measure preserving with density $m(u,s,v,t)=n(u,s,v,t)d(u,s,v,t)\footnote{A map $\phi:(u,s,v,t)\mapsto (u_2,s_2,v_1,t_1)$ is called {\it measure preserving map} with density $m(u,s,v,t)$, iff $m(u_2,s_2,v_1,t_1)=\frac{\partial (u_2,s_2,v_1,t_1)}{\partial (u,s,v,t)} m(u,s,v,t),$ where $\frac{\partial (u_2,s_2,v_1,t_1)}{\partial (u,s,v,t)}$ the Jacobian determinant of the mapping $\phi.$}$ where                                                \begin{align*}
    n(u,s,v,t)=(p-q+u-v)t-s(u-v),& & d(u,s,v,t)=(p-q+u-v)s-t(u-v),
    \end{align*}
\item it preserves the Poisson structure
\begin{align*}
\Omega= n(u,s,v,t)\frac{s}{t}\frac{\partial}{\partial u}\wedge \frac{\partial}{\partial s}-d(u,s,v,t)\frac{\partial}{\partial u}\wedge \frac{\partial}{\partial t}-n(u,s,v,t)\frac{s}{t}\frac{\partial}{\partial s}\wedge \frac{\partial}{\partial v}-d(u,s,v,t)\frac{\partial}{\partial v}\wedge \frac{\partial}{\partial t},
\end{align*}
\item it is a Liouville integrable map.
\end{enumerate}
\end{prop}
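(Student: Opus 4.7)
The plan is to dispatch the three claims by leveraging the sparse dependency structure of $\phi$, the two identities from Proposition 1, and the Lax representation (\ref{comex}). For claim (1), I would first exploit the fact that $u_2$ depends on $s,v,t$ but not $u$, and cyclically $s_2$ on $u,v,t$, $v_1$ on $u,s,t$, and $t_1$ on $u,s,v$: this makes the Jacobian matrix of $\phi$ have a zero diagonal, so the $4 \times 4$ determinant expands into a manageable rational expression involving only off-diagonal entries. The relations $u_2 - v_1 + (p-q)/2 = -(u - v + (p-q)/2)$ and $s_2/t_1 = s/t$ then simplify the values $n(u_2,s_2,v_1,t_1)$ and $d(u_2,s_2,v_1,t_1)$ substantially, reducing the assertion $m(u_2,s_2,v_1,t_1)\cdot \det J = m(u,s,v,t)$ to a routine polynomial identity.

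For claim (2), I would first rewrite the bivector in the decomposable form $\Omega = (\partial_u + \partial_v) \wedge (\alpha \partial_s - \beta \partial_t)$ with $\alpha = n s/t$ and $\beta = d$, which makes the rank equal to $2$ explicit. Invariance of $\Omega$ under $\phi$ then amounts to checking that the pushforward of each of the two constituent vector fields, modulated by the Jacobian factor from (1), reproduces the corresponding field in the image variables; verifying the Jacobi identity for $\Omega$ itself is a separate one-off check. The main obstacle here is algebraic bookkeeping, and I would express all intermediate quantities in terms of $n$, $d$, and the invariants of Proposition 1 to keep the calculation compact.

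For claim (3), in the Poisson setting with $\dim = 4$ and $\mathrm{rank}\,\Omega = 2$, Liouville integrability requires two functionally independent Casimirs together with one further first integral, the involution condition for the latter being automatic. Using the decomposition above, the Casimir equations reduce to $(\partial_u + \partial_v) C = 0$ together with $\alpha C_s - \beta C_t = 0$, which yield $C_1 = u - v$ immediately and a second Casimir $C_2$ by integrating the residual characteristic ODE on each leaf $u - v = \mathrm{const}$. The invariant $I = s/t$ from Proposition 1 is not a Casimir but serves as the additional first integral; it is automatically in involution with the two Casimirs. Functional independence of $(C_1, C_2, I)$, combined with the Poisson property from claim (2), completes the proof. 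The main conceptual hurdle is producing a clean closed-form second Casimir; an alternative, should this prove awkward, is to extract first integrals from traces of products of the Lax matrices in (\ref{comex}), which are invariants of $\phi$ for every value of the spectral parameter and reduce on spectral expansion to explicit rational functions of $(u,s,v,t)$.
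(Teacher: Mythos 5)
The paper states this proposition without any proof --- it is the one assertion in Section 3 left entirely unjustified, and the Conclusions even list Liouville integrability of the other list members as an open problem --- so there is no argument of the authors to compare yours against; I can only assess your plan on its own terms. For parts (1) and (2) the plan is workable. The zero-diagonal Jacobian structure is real, and the two invariance relations of the preceding proposition do exactly the work you expect: with $c=\frac{u-v+p-q}{(u-v)st}$ one finds $n(u_2,s_2,v_1,t_1)=c\,d(u,s,v,t)$ and $d(u_2,s_2,v_1,t_1)=c\,n(u,s,v,t)$, so (1) reduces to the single identity $\det J=c^2$. Your decomposition $\Omega=(\partial_u+\partial_v)\wedge\left(\frac{ns}{t}\,\partial_s-d\,\partial_t\right)$ is correct, and you are right that Jacobi is a separate check (decomposability only reduces it to $[X,Y]\in\operatorname{span}(X,Y)$). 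One correction: no Jacobian-determinant factor enters the pushforward of a bivector, and matching the two factors separately is sufficient but not necessary; here $\phi_*(\partial_u+\partial_v)=\partial_{u_2}+\partial_{v_1}$ holds exactly, but for the second factor you should only expect equality modulo the first.

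The genuine gap is in part (3). Casimirs of a preserved Poisson structure need not be first integrals of the map, since $\phi$ may permute the symplectic leaves, and here they are not: $C_1=u-v$ satisfies $u_2-v_1=-(u-v)-(p-q)$ (Table \ref{table13}), so it must be replaced by $\bigl(u-v+\frac{p-q}{2}\bigr)^2$; worse, the Casimir produced by your characteristic ODE is, up to redefinition, $C_2=\bigl(p-q+2(u-v)\bigr)\log(s+t)-(u-v)\log(st)$, and a direct computation using $s_2=\frac{u-v+p-q}{t(u-v)}$, $t_1=\frac{u-v+p-q}{s(u-v)}$ gives $C_2\mapsto -C_2+(p-q)\log\frac{u-v+p-q}{u-v}$, so $C_2$ is preserved neither by $\phi$ nor by $\phi^2$ (under which it shifts by an orbit-constant). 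Hence your triple $(C_1,C_2,I)$ is not a set of invariants of $\phi$, and the count ``two invariant Casimirs plus one leaf integral'' is not realized by these functions. The Lax-trace fallback does not rescue this: by the paper's own Table \ref{table13} it yields exactly the two invariants $u-v+\frac{p-q}{2}$ (up to sign) and $s/t$, one short of the three functionally independent invariant functions that Liouville integrability requires in dimension four with a rank-two bracket. Establishing (3) therefore needs an additional idea --- a genuinely invariant third integral, Poisson-commuting with $s/t$ --- which your outline does not supply.
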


\begin{prop} \label{prop:3.3}
The bond system $m\mathcal{H}1,$ corresponds to the following 3D-compatible vertex system
\begin{align*}
&2(z_{12}-z)(x_2-x_1)+(p-q)(x_1+x_2)=0,&2(x_{12}-x)(z_2-z_1)+(p-q)(x_{12}+x)=0,&&
\end{align*}
which we refer to as $double-H2$ or shortly as $d-H2$.
\end{prop}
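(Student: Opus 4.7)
The plan is to introduce a local vertex potential substitution that expresses the bond fields $u,s,v,t$ in terms of two vertex fields $x, z$ on $\mathbb{Z}^2$, and then reduce $m\mathcal{H}1$ to d-H2 by direct algebraic manipulation. The substitution is dictated by the two identities in Proposition 3.1(1). The multiplicative invariance $s_2/t_1 = s/t$ suggests taking $s = x_1/x$ and $t = x_2/x$, so that $s/t = x_1/x_2$ is manifestly preserved by the map. The anti-invariance $u_2 - v_1 + (p-q)/2 = -(u - v + (p-q)/2)$ suggests taking $u = z + z_1 - p/2$ and $v = z + z_2 - q/2$, so that $u - v + (p-q)/2 = z_1 - z_2$ flips sign correctly under the combined diagonal shift $(u,v) \mapsto (u_2, v_1)$. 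Both identities then hold as identities in $x,z$, independently of any equation.

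Next I would substitute these expressions into the four equations of $m\mathcal{H}1$. After clearing denominators, the equations $u_2 = v + t(p-q)/(s-t)$ and $v_1 = u + s(p-q)/(s-t)$ both simplify to
\[
2(z_{12}-z)(x_2-x_1)+(p-q)(x_1+x_2)=0,
\]
i.e.\ the first equation of d-H2. Similarly, the equations $s_2 = 1/t + (p-q)/(t(u-v))$ and $t_1 = 1/s + (p-q)/(s(u-v))$ both simplify to $2(x_{12}-x)(z_2-z_1)+(p-q)(x_{12}+x)=0$, the second equation of d-H2. Thus the four bond equations collapse, with multiplicity two, onto the two vertex equations, establishing the correspondence.

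Finally, the 3D-compatibility of d-H2 is inherited from the multi-dimensional compatibility of $m\mathcal{H}1$ established in Proposition 3.1(3). Since the substitution is local (each bond field depends only on vertex values at the endpoints of the corresponding edge), the potentials $x, z$ are consistently defined on any higher-dimensional sub-lattice, and the relations induced on each elementary quadrangle are precisely the d-H2 equations, which therefore close up around every 3D cube. The main obstacle in this strategy is producing the correct substitution; once it is in hand, the algebra reduces to a direct verification.
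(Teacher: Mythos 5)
Your proposal is correct and, for the main content of the proposition, takes essentially the same route as the paper: you read off the potential substitutions $s=x_1/x$, $t=x_2/x$ and $u=z_1+z-\tfrac{p}{2}$, $v=z_2+z-\tfrac{q}{2}$ from exactly the two invariance conditions of Proposition 3.1(1), and the four bond equations of $m\mathcal{H}1$ then collapse pairwise onto the two equations of $d$-$H2$ by direct computation, just as in the paper. The one place where you genuinely diverge is the 3D-compatibility claim: you argue that it is inherited from the multi-dimensional compatibility of the bond system, whereas the paper verifies it directly by exhibiting the explicit closure formulae for $x_{123}$ and $z_{123}$ (observing that the variables separate and that these coincide with the consistency formulae of $H1$ and $H2$ respectively). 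Your inheritance argument is sound, but note that it tacitly relies on the invariance conditions holding on every two-dimensional face of the cube, so that $x_{ij}$ computed as $x_jY^i_j$ and as $x_iY^j_i$ agree and the potentials are globally well defined on the cube; this is guaranteed by the multi-dimensional version of Proposition 3.1(1), and it would strengthen your write-up to say so. The paper's explicit route buys the additional structural information that $x$ and $z$ decouple at the level of the cube closure, which is not visible from the inheritance argument alone.
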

\begin{proof}
The first invariant condition $\frac{s_2}{t_1}=\frac{s}{t},$ guarantees the existence of a potential function $x$ s.t.
\begin{align*}
&s=\frac{x_1}{x},& t=\frac{x_2}{x}.
\end{align*}
The second invariant suggests $p - q + u + u_2 - v - v_1=0,$  which in turn guarantees the existence of a potential function $z$ s.t.
\begin{align*}
&u=z_1+z-\frac{p}{2},& v=z_2+z-\frac{q}{2}.
\end{align*}
Finally, the difference system $m\mathcal{H}1$ in terms of the potential functions $x,z,$  reads exactly as the indicated by the proposition system $d-H2$. The 3D-compatibility of the vertex system $d-H2,$ is manifested by the formulae:
\begin{align}
x_{123}=-\frac{x_1x_2(p_1-p_2)+x_2x_3(p_2-p_3)+x_3x_1(p_3-p_1)}{x_1(p_2-p_3)+x_2(p_3-p_1)+x_3(p_1-p_2)},\\
z_{123}=-\frac{(p_1-p_2)(p_2-p_3)(p_3-p_1)+z_1z_2(p_1-p_2)+z_2z_3(p_2-p_3)+z_3z_1(p_3-p_1)}{z_1(p_2-p_3)+z_2(p_3-p_1)+z_3(p_1-p_2)}.
\end{align}
Note that in the 3d-compatibility formulae for $x_{123}, z_{123}$ the variables $x$ and $z$ separate! The formula for $x_{123}$ coincides with the 3D-compatibility formulae of $H1,$ whereas the formula for $z_{123}$ coincides with the 3D-compatibility formulae of $H2$.
\end{proof}

\begin{lemma}
The constraint $tu=sv,$ or equivalently $s_2u_2=t_1v_1,$ is consistent with the system $m\mathcal{H}1.$
\end{lemma}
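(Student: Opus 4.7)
The plan is to reduce the lemma to the equivalence recorded as part (2) of Proposition 3.1, namely that the $m\mathcal{H}1$ equations force $tu = sv$ to be equivalent to $u_2 s_2 = t_1 v_1$, and then to deduce lattice-wide persistence of the constraint from a standard staircase initial value argument.

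The first and essentially only algebraic step is to exhibit an explicit identity of the form
\begin{equation*}
u_2 s_2 - t_1 v_1 = F(u,s,v,t;p,q)\,(tu - sv),
\end{equation*}
with $F$ a rational prefactor that is non-vanishing on generic data. I would obtain such a factorisation by substituting the four expressions for $u_2,s_2,v_1,t_1$ prescribed by $m\mathcal{H}1$ into the left-hand side and collecting terms; the combinations $(p-q)/(s-t)$ and $(p-q)/(u-v)$ that already appear on the right-hand side of $m\mathcal{H}1$ are the natural intermediate variables in which the factorisation is transparent. In particular, on the locus $tu = sv$ one reads off $u_2 s_2 = t_1 v_1$, and conversely.

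With this pointwise equivalence in hand, the consistency assertion follows by the usual staircase argument. I would prescribe on an admissible staircase in the $\mathbb{Z}^2$ graph the edge data $(u,s)$ on horizontal and $(v,t)$ on vertical edges, subject to $tu = sv$ at each vertex of the staircase at which both an east-edge and a north-edge datum are specified (equivalently, $u_{-1}s_{-1} = v_{-2}t_{-2}$ at vertices where a west-edge and a south-edge datum meet). Advancing one elementary square at a time by the map $\phi$, which is well defined thanks to the multi-dimensional compatibility of $m\mathcal{H}1$ established in part (3) of Proposition 3.1, the factorisation above guarantees that the newly produced edges satisfy the constraint at the north-east corner of the filled square. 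An induction over the shells of squares reached from the staircase then yields the constraint at every vertex of the lattice.

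The main obstacle is really only the combinatorial matching of the two equivalent forms $tu = sv$ and $u_2 s_2 = t_1 v_1$ at the appropriate vertices of each elementary square; this is precisely what part (2) of Proposition 3.1 encodes, so once the algebraic identity is written down no further analytical difficulty arises.
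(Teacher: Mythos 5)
Your single-quad factorisation is correct: for $m\mathcal{H}1$ one indeed finds $u_2s_2-t_1v_1=-\tfrac{u-v+p-q}{st(u-v)}\,(tu-sv)$, which establishes the pointwise equivalence of part (2) of Proposition 3.1. But the ``usual staircase argument'' you then invoke glosses over precisely the step to which the paper's entire proof is devoted, and that step does not follow from the single-quad equivalence. Filling one elementary quad turns the pair (bottom edge $(u,s)$, left edge $(v,t)$) into the pair (top edge $(u_2,s_2)$, right edge $(v_1,t_1)$), and your identity yields the constraint $u_2s_2=t_1v_1$ at that quad's north-east corner, i.e.\ a relation between the west-going and south-going edges \emph{of the same quad}. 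To iterate, however, you must verify a constraint of the form $tu=sv$ at the south-west corner of the \emph{next} quad to be filled, and at such a vertex the east-going edge is the top edge of one already-filled quad while the north-going edge is the right edge of a \emph{different}, diagonally adjacent, already-filled quad. In the paper's notation this is the mixed relation $t_1\tilde{u}_2=\tilde{s}_2 v_1$ at the common vertex of two consecutive quads of the staircase, and it is not an instance of $u_2s_2=t_1v_1$; your induction over shells therefore cannot advance past the first layer as written.

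The paper closes exactly this gap with a genuinely two-quad computation: assuming $tu=sv$ and $\tilde{t}\tilde{u}=\tilde{s}\tilde{v}$ on the two quads together with the relation $us=\tilde{v}\tilde{t}$ at their shared staircase vertex, it shows that the rational expression $t_1\tilde{u}_2/(\tilde{s}_2v_1)$ collapses to $us/(\tilde{v}\tilde{t})=1$. Your proposal becomes complete if you add this step (equivalently, a factorisation of $t_1\tilde{u}_2-\tilde{s}_2v_1$ over the two-quad configuration modulo the initial constraints), and if you state explicitly that the constraint imposed on the initial staircase must be imposed at \emph{every} vertex of the staircase, including the vertices where a west-going and a south-going datum meet --- otherwise the hypothesis $us=\tilde{v}\tilde{t}$ needed above is unavailable. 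The claim that ``no further analytical difficulty arises'' once the single-quad identity is written down is the point at which the argument breaks.
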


\begin{figure} 
\def\svgwidth{0.37\linewidth}
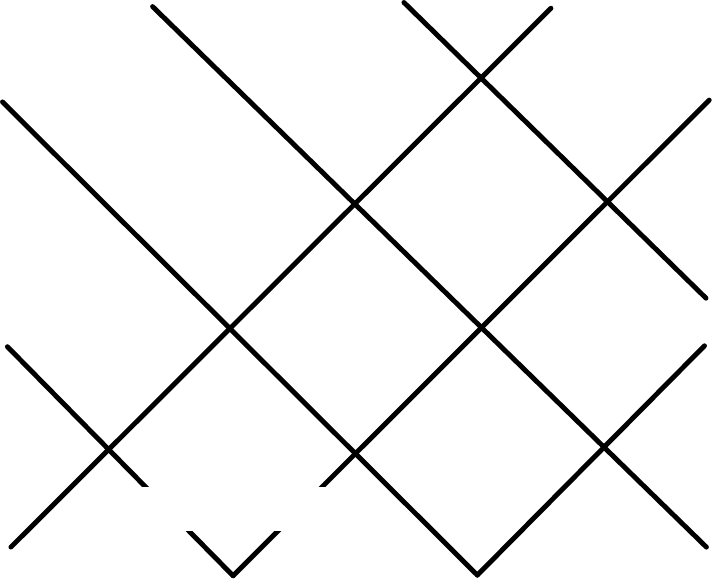
\caption{Consistent constraint on the lattice} \label{figx}
\end{figure}

\begin{proof}
It is enough to prove that if
\begin{equation}\label{eg-lemma-comp}
tu=sv\quad  \mbox{and} \quad \tilde{t}\tilde{u}=\tilde{s}\tilde{v}
 \end{equation}
 holds on two consecutive quads with a common vertex of the $\mathbb{Z}^2$  lattice, then   $t_1\tilde{u_2}=\tilde{s_2}v_1$  holds  (see Figure \ref{figx}). To prove this statement, first note that if $(\ref{eg-lemma-comp})$ holds, from the difference system   $m\mathcal{H}1$ and its tilde version, namely from:
 $$
\begin{array}{ll}
\begin{array}{l}
{\dis u_2=v+t\frac{p-q}{s-t},\quad v_1=u+s\frac{p-q}{s-t},} \\
{\dis s_2=\frac{1}{t}+\frac{p-q}{t(u-v)},\quad t_1=\frac{1}{s}+\frac{p-q}{s(u-v)}},
\end{array}& \begin{array}{l}
{\dis \tilde{u}_2=\tilde{v}+\tilde{t}\frac{p-q}{\tilde{s}-\tilde{t}},\quad \tilde{v}_1=\tilde{u}+\tilde{s}\frac{p-q}{\tilde{s}-\tilde{t}},} \\
{\dis \tilde{s}_2=\frac{1}{\tilde{t}}+\frac{p-q}{\tilde{t}(\tilde{u}-\tilde{v})},\quad \tilde{t}_1=\frac{1}{\tilde{s}}+\frac{p-q}{\tilde{s}(\tilde{u}-\tilde{v})}},
\end{array}
\end{array}
  $$
  we obtain that
\begin{equation}\label{eg-lemma-comp2}
s_2u_2=t_1v_1,\quad   \tilde{s}_2\tilde{u}_2=\tilde{t}_1\tilde{v}_1.
\end{equation}
Using similar arguments, $us=\tilde{v}\tilde{t}$ also holds.

The expression
\be\label{lemma1-pf}
\frac{t_1\tilde{u_2}}{\tilde{s_2}v_1},
\ee
 is a rational function of $u,s,v,t,\tilde{u},\tilde{s},\tilde{v}$ and $\tilde{t}.$ Then from relations $(\ref{eg-lemma-comp})$ we obtain
 $$
 \frac{t_1\tilde{u_2}}{\tilde{s_2}v_1}=\frac{us}{\tilde{v}\tilde{t}},
 $$
 that equals $1$ since $su=\tilde{v}\tilde{t}$ and that completes the proof.

\end{proof}

\begin{prop}
The system $m\mathcal{H}1$ together with the constraint $tu=sv,$ or equivalently with the constraint $s_2u_2=t_1v_1,$  leads to the $H1$ integrable lattice equation.
\end{prop}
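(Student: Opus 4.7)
The strategy is to reconstruct a single vertex potential $x$ from the four bond variables and then show that all four equations of $m\mathcal{H}1$ collapse to the single equation $H1$ on $x$.

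First I would invoke the first invariant $s_2/t_1 = s/t$ from the proposition opening this section: this is precisely the closure condition guaranteeing the existence of a function $x:V\to\mathbb{C}$ with $s = x_1/x$ and $t = x_2/x$. In these coordinates the constraint $tu = sv$ rewrites as $x_2\,u = x_1\,v$, so the ratios $u/x_1$ and $v/x_2$ coincide and define a common vertex function $\omega$, yielding the parametrization $u = \omega x_1$, $v = \omega x_2$.

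The core step is then to show that the rescaled vertex quantity $\xi := \omega/x$ is actually a constant. Substituting the parametrizations into the $u_2$ and $v_1$ equations of $m\mathcal{H}1$ and simplifying gives
\[
\omega_2\,\frac{x_{12}}{x_2} \;=\; \omega + \frac{p-q}{x_1 - x_2} \;=\; \omega_1\,\frac{x_{12}}{x_1},
\]
so $\omega_1/x_1 = \omega_2/x_2$, i.e.\ $\xi_1 = \xi_2$. Substituting the same parametrization into the $s_2$ equation of $m\mathcal{H}1$ produces the further identity
\[
\xi\,(x_{12} - x)(x_1 - x_2) = p - q,
\]
which lets us eliminate $(p-q)/(x_1-x_2)$ from the previous relation: one obtains $\xi_2 x_{12} = \xi x + \xi(x_{12}-x) = \xi x_{12}$, hence $\xi_2 = \xi$. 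Combined with $\xi_1 = \xi_2$, the function $\xi$ is shift-invariant in both lattice directions, therefore a constant $c$.

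Finally, rescaling $x \mapsto \sqrt{c}\,x$ absorbs this constant, leaving the standard identifications $u = x_1 x$, $v = x_2 x$, $s = x_1/x$, $t = x_2/x$, while the key identity reduces to $(x_{12}-x)(x_1-x_2) = p-q$, which is $H1$. The same substitution in the $t_1$ equation reproduces the identical $H1$, so all four equations of $m\mathcal{H}1$ collapse consistently onto a single equation on $x$. The main obstacle is the constancy of $\xi$: a priori $\xi$ is only a vertex function, and pinning it down as a genuine constant requires balancing three of the four $m\mathcal{H}1$ equations together with the constraint simultaneously; once this is in place, the final passage to $H1$ is essentially a rescaling.
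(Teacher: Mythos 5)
Your proof is correct, and it follows the same overall strategy as the paper: reconstruct a vertex potential from the bond variables and substitute back to obtain $H1$. The difference lies in how the potential is identified. The paper's proof asserts that the constraint forces the additional invariance conditions $u_2u=v_1v$ and $s_2/s=t_1/t$, and then states that these guarantee a \emph{single} potential $x$ with $u=x_1x$, $v=x_2x$, $s=x_1/x$, $t=x_2/x$. Strictly speaking, those conditions alone only produce two potentials (one for the pair $(s,t)$ and one for $(u,v)$) whose ratio $\xi=\omega/x$ satisfies $\xi_1=\xi_2$, i.e.\ is a priori an arbitrary function of $m+n$; pinning it down to a constant requires going back to the equations of $m\mathcal{H}1$ themselves. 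Your argument does exactly this: the relation $\xi(x_{12}-x)(x_1-x_2)=p-q$ extracted from the $s_2$ equation, fed into the $u_2$ relation, yields $\xi_2=\xi$, which together with $\xi_1=\xi_2$ makes $\xi$ constant, after which the rescaling $x\mapsto\sqrt{c}\,x$ finishes the proof. So your write-up supplies explicitly the step the paper leaves implicit, at the cost of being somewhat longer; both routes land on the same substitution $u=x_1x$, $v=x_2x$, $s=x_1/x$, $t=x_2/x$ and the same final identity $(x_{12}-x)(x_1-x_2)=p-q$.
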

\begin{proof}
If we impose $tu=sv,$ or equivalently  $s_2u_2=t_1v_1,$ then it can be easily shown that the system $m\mathcal{H}1$ satisfies the following additional invariant conditions:
\begin{align} \label{con2}
&u_2u=v_1v,& \frac{s_2}{s}=\frac{t_1}{t}&.
\end{align}
The relations (\ref{con2}) guarantee the existence of a potential function $x$ such that:
\begin{align} \label{con2}
&u=x_1x,&v=x_2x,&&s=\frac{x_1}{x},&&t=\frac{x_2}{x}.
\end{align}
Then $m\mathcal{H}1,$ 
written in terms of $x$ is exactly the $H1$ integrable lattice equation.
\end{proof}

\section{Compatible systems of difference equations in bond variables} \label{Section:4}
In this section we obtain from the list of integrable quad-equations mentioned in the introduction, two lists which we refer to as  the $m-$list (where $m$ stands as abbreviation of $multiplicative$) and the $a-$list (where $a$ stands as abbreviation of $additive$) respectively, of integrable difference systems in bond variables.  

\subsection{The $m-$list of difference systems in bond variables}

We consider any member $Q(x,x_1,x_2,x_{12};p,q)=0,$ of the list of integrable quad equations presented in the introduction,  apart $Q2$ and $H2$\footnote{We do not consider $Q2$ and $H2$ since for these members  our procedure leads to multi-quadratic relations rather than maps. } and the set of functions
$$
\begin{array}{l}
\{u,s,v,t\},\;\;\mbox{where}\;\;u:=x_1 x,\quad v:=x_2 x, \quad s:=x_1/x,\quad t:=x_2/x,\\
\{u_2,s_2,v_1,t_1\},\;\;\mbox{where}\;\;u_2:=x_{12} x_2,\quad v_1:=x_{12} x_1, \quad s_2:=x_{12}/x_2,\quad t_1:=x_{12}/x_1.
\end{array}
$$
The functions  are not independent, the following relations hold
\be \label{depm0}
u_2u=v_1v,\quad \frac{s_2}{s}=\frac{t_1}{t},\quad tu=sv.
\ee
Using the above functionally independent relations, the lattice equation $Q(x,x_1,x_2,x_{12};p,q)=0$ under consideration can be re-written in four different ways in terms of $u,s,v,t$ and their shifts.
 Among the various ways that we can re-write the quad-equation under consideration, we restrict our attention to the
 \be \label{tar-30}
u_2=f^1(s,v,t),\quad s_2=f^2(u,v,t),\quad v_1=f^3(u,s,t),\quad t_1=f^4(u,s,v),
\ee
where $f^i,\;i=1,2,3,4$ are appropriate rational functions depending on the indicated variables. We shall now make this requirement more precise and look for some consequences.
 The system of difference equations $(\ref{tar-30})$ together with the relations $(\ref{depm0}),$ is clearly integrable since  it is the quad-equation under consideration re-written as a difference system in bond variables.  If we consider though just the system $(\ref{tar-30})$ without the relations $(\ref{depm0}),$  since as it can be easily shown the relations $(\ref{depm0})$ are not consequences of $(\ref{tar-30}),$ integrability of any sort is not expected. Miraculously, the systems of equations $(\ref{tar-30})$ that arise from any member of the list of quad-equations apart  $Q2$ and  $H2$ integrable lattice equations, even without the relations $(\ref{depm0}),$ are integrable! The key integrability feature of these systems  is the multi-dimensional compatibility that is equivalent to  the Lax formulation of the latter.

As an example let us derive $m\mathcal{Q}4$ from $Q4.$  $Q4$ in terms of $u_2,s,v,t$ reads:
\be \label{q4u}
u_2=\frac{at(pq-sv)+v(ps-qt)}{a(1-pqsv)+qs-pt},
\ee
where $a:=\frac{pQ-Pq}{1-p^2q^2}.$ Also $Q4$ in terms of $s_2,u,v,t,$ of $v_1,u,s,t,$ and in terms of $t_1,u,v,s,$ respectively reads:
\be \label{q4svt}
\begin{array}{lll}
s_2={\dis \frac{a(pq-t u)+pu-qv}{a v (1-p q t u)+t(q u-p v)}},& v_1={\dis\frac{as(pq-t u)+u(p s-q t)}{a(1-p q t u)q s-p t}},&
t_1={\dis\frac{a(-pq+sv)-pu+qv}{au(-1+p q s v)+s(p v-q u)}}.
\end{array}
\ee
Equations $(\ref{q4u}),(\ref{q4svt}),$ without the relations $(\ref{depm0}),$ constitutes the $m\mathcal{Q}4$ difference system in bond variables, which
after the identification $u=X^1, V=X^2, u_2=X^1_2, v_1=X^2_1, \; p=p^1,\; q=p^2,$ gets exactly the form  in Proposition \ref{prop3.1}.
Working similarly with the remaining members of the list, we obtain difference systems  which constitute the {\it $m-$list}  of integrable difference systems in bond variables.

\begin{prop}[The $m$-list of difference systems in bond variables]\label{prop3.1}
The following systems of difference equations:
\begin{align*}
&\begin{aligned}
&X^i_j=\frac{aY^j(p^ip^j-Y^iX^j)+X^j(p^iY^i-p^jY^j)}{a(1-p^ip^jX^jY^j)+p^jY^i-p^iY^j},& &Y^i_j=\frac{a(p^ip^j-X^iY^j)+p^iX^i-p^jX^j}{aX^j(1-p^ip^jX^iY^j)+Y^j(p^jX^i-P^iX^j)}&,\\
&\mbox{where}\;\; a:=\frac{p^iP^j-p^jP^i}{1-[p^i]^2[p^j]^2},&& [P^i]^2=[p^i]^4-\gamma [p^i]^2+1,\; [P^j]^2=[p^j]^4-\gamma [p^j]^2+1,&
\end{aligned}& (m\mathcal{Q}4)\\
&\begin{aligned}
&X^i_j=\frac{\delta abc Y^j+X^jY^j(b+cY^i)-aX^jY^i}{aY^j-bY^i-c},& Y^i_j=\frac{\delta abc-aX^i+bX^j+cX^iY^j}{aX^jY^j-bX^iY^j-cX^j},& \\
&\mbox{where}\;\;a:=p^i-\frac{1}{p^i},\;\; b:=p^j-\frac{1}{p^j},\;\; c:=\frac{p^i}{p^j}-\frac{p^j}{p^i},&& {} &
\end{aligned} & (m\mathcal{Q}3^{\delta})\\
&\begin{aligned}
&X^i_j=X^j-(p^i-p^j)\frac{-\delta p^ip^jY^j+X^j(1-Y^i)(1-Y^j)}{p^i(1-Y^j)-p^j(1-Y^i)},& \\
&Y^i_j=\frac{1}{Y^j}\left(1+(p^i-p^j)\frac{-\delta p^ip^jY^j+(Y^j-1)(X^iY^j-X^j)}{p^iX^j(Y^j-1)+p^j(X^j-X^iY^j)}\right),&
\end{aligned} & (m\mathcal{Q}1^{\delta})\\
&\begin{aligned}
&X^i_j=\frac{p^i({p^j}^2-1)Y^iX^j+p^jY^j(X^j-p^j)+{p^i}^2Y^j(1-p^jX^j)}{p^j({p^i}^2-1)Y^i+{p^j}^2(Y^iX^j-p^iY^j)+p^i(Y^j-p^iX^jY^i},& \\
&Y^i_j=\frac{1}{Y^j}\frac{p^i({p^j}^2-1)X^i+p^j(X^j-p^j)+{p^i}^2(1-p^jX^j)}{p^j({p^i}^2-1)X^i+{p^j}^2(X^i-p^i)X^j+p^i(1-p^iX^i)X^j},&
\end{aligned} & (m\mathcal{A}2)\\
&\begin{aligned}
&X^i_j=X^j+(p^i-p^j)\frac{\delta p^ip^jY^j-X^j(1+Y^i)(1+Y^j)}{p^i(1+Y^j)-p^j(1+Y^i)},& \\
&Y^i_j=\frac{1}{Y^j}\left(1+(p^i-p^j)\frac{\delta p^ip^jY^j-(1+Y^j)(X^iY^j+X^j)}{p^iX^j(Y^j+1)+p^j(X^j+X^iY^j)}\right),&
\end{aligned} & (m\mathcal{A}1^{\delta})\\
&\begin{aligned}
&X^i_j=\frac{\delta^2({p^j}^2-{p^i}^2)Y^j+X^j(p^jY^j-p^iY^i)}{p^iY^j-p^jY^i},& \\
&Y^i_j=\frac{1}{Y^j}\frac{\delta^2({p^i}^2-{p^j}^2)+p^iX^i-p^jX^j}{p^jX^i-p^iX^j},&
\end{aligned} & (m\mathcal{H}3^{\delta})\\
&\begin{aligned}
&X^i_j=X^j+Y^j\frac{p^i-p^j}{Y^i-Y^j},&\\
&Y^i_j=\frac{1}{Y^j}\left(1+\frac{p^i-p^j}{X^i-X^j}\right),&
\end{aligned} & (m\mathcal{H}1)
\end{align*}
where $i\neq j\in\{1,2\},$
\begin{enumerate}
\item admit the invariance conditions presented in Table \ref{table13}
\item it holds $X^iY^j-X^jY^i=0,\; i\neq j\in\{1,2\}$ if and only if $X^i_jY^i_j-X^j_iY^j_i=0,\; i\neq j\in\{1,2\}$
\item they can be extended to multi-dimensions by allowing $i\neq j\neq k\in\{1,2,\ldots,n\},$
 they are multi-dimensional compatible i.e. $X^i_{jk}=X^i_{kj}, \; Y^i_{jk}=Y^i_{kj}$ $i\neq j\neq k\in\{1,2,\ldots,n\}$ 
\end{enumerate}
\end{prop}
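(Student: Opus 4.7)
The plan is to reduce each of the three claims to direct rational-algebra verifications modelled on the $m\mathcal{H}1$ calculations already carried out in the manuscript. For every member of the $m$-list, the system has the special form \eqref{sys-gen} with coefficients that are explicitly rational in $p^i, p^j$ (and, in the $m\mathcal{Q}4$ case, in $P^i, P^j$ constrained by the elliptic relation $[P^\ell]^2 = [p^\ell]^4 - \gamma[p^\ell]^2 + 1$), so every identity to be checked is algebraic and uniform across the list; the difference from case to case is only combinatorial bulk.

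For claim (1), I would substitute the explicit formulae for $X^i_j, Y^i_j$ from each system into each candidate invariant in Table~\ref{table13} and reduce. For the six non-elliptic systems this is elementary polynomial cancellation; for $m\mathcal{Q}4$ the simplification additionally requires the curve relation above, which is cleanly handled by a Gr\"obner reduction modulo the ideal generated by $[P^i]^2 - [p^i]^4 + \gamma[p^i]^2 - 1$ for $i \in \{1,2\}$. For claim (2), I would compute the numerator of $X^i_j Y^i_j - X^j_i Y^j_i$ as a polynomial in $(X^1, Y^1, X^2, Y^2, p^1, p^2)$ and show it is divisible by $X^i Y^j - X^j Y^i$. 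Because the factor on the right is generically nonzero, both directions of the \emph{iff} follow from this single factorisation, in close analogy with the proof of the lemma in Section~\ref{Section:3}.

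For claim (3), the substantive one, I would allow $i \ne j \ne k \in \{1,2,3\}$ and extend the system to three directions as indicated. By iterating the rules twice one obtains explicit rational expressions for $X^i_{jk}, Y^i_{jk}$ in terms of $(X^1, X^2, X^3, Y^1, Y^2, Y^3, p^1, p^2, p^3)$, and compatibility amounts to verifying that these expressions are symmetric under the transposition $j \leftrightarrow k$. Following the $m\mathcal{H}1$ template, I would first try to bring each expression into a form where the symmetry is manifest (as was achieved for $m\mathcal{H}1$ by introducing $\chi^{lm}, \psi^{lm}, \pi^{lm}$); when no such compact form presents itself, one can instead clear denominators and check that the numerator of $X^i_{jk} - X^i_{kj}$ vanishes identically as a polynomial, modulo the elliptic ideal in the $m\mathcal{Q}4$ case. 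The main obstacle is precisely the $m\mathcal{Q}4$ verification, where both $P^i$ enter nontrivially and the intermediate rational expressions swell dramatically, so computer-algebra assistance is essentially required. Once three-dimensional compatibility is established for each system, the extension to arbitrary $n \ge 3$ is automatic: any potential obstruction to the equality of two iterated shifts differing by a transposition is generated inside a three-dimensional sub-cube, and so is ruled out by the 3D case (this is the standard argument going back to \cite{ABSf}).
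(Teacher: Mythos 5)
Your plan coincides with the paper's own (largely implicit) treatment: the authors establish (1)--(3) by direct rational computation, carried out explicitly only for $m\mathcal{H}1$ (where $j\leftrightarrow k$-symmetric closed forms for $X^i_{jk},Y^i_{jk}$ are exhibited), and they rely on exactly the same reduction of multi-dimensional compatibility to the three-dimensional check via \cite{ABSf} that you invoke, so the remaining members are left to the same case-by-case (computer-assisted) verification you describe. The only ingredient you do not mention is the labour-saving alternative recorded in Appendix~\ref{appC}: three- and hence multi-dimensional compatibility can instead be deduced from the $n$-factorization (null-space) property of the Lax matrices of Section~\ref{Section:5}, which circumvents the massive elimination you rightly flag as the obstacle in the $m\mathcal{Q}4$ case.
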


\begin{remark}
The point transformation $(X^i,Y^i)\mapsto (-X^i,-Y^i)$ maps $m\mathcal{Q}1^{\delta}$ to $m\mathcal{A}1^{\delta},$ so these bond systems are point equivalent. $m\mathcal{A}2$ is also point equivalent to $m\mathcal{Q}3^{0},$ the transformation in this case reads $(X^i,Y^i)\mapsto \left([Y^i]^{(-1)^{1+m_1+m_2+\ldots +m_n}},[X^i]^{(-1)^{1+m_1+m_2+\ldots +m_n}}\right).$ Although $m\mathcal{Q}1^{\delta}$ is essentially the same as  $m\mathcal{A}1^{\delta}$  and $m\mathcal{A}2$ is essentially the same as $m\mathcal{Q}3^{0},$ for historical reasons we decided to keep all these pairwise equivalent bond systems as members of the $m-$list.
\end{remark}

\begin{remark}
For each one of the difference systems of the $m-$list one can associate a rational map. Namely,  for $n=2,$ by making the identifications $u:=X^1, v:=X^2, s:=Y^1, t:=Y^2$ and $u_2:=X^1_2, v_1:=X^2_1, s_2:=Y^1_2, t_1:=Y^2_1,$ to any of the difference systems presented earlier, we have an associated map $\phi: (u,s,v,t)\mapsto (u_2,s_2,v_1,t_1).$  Due to the compatibility of the difference systems, the companion  map $\phi_c: (u,s,v_1,t_1)\mapsto (u_2,s_2,v,t)$ of the map $\phi$, is a Yang-Baxter map (see \cite{ABSf}).
 In Appendix \ref{appA}, we present explicitly the corresponding Yang-Baxter maps associated with the difference systems of the $m-$list.

\end{remark}

\subsection{The $a-$list of difference systems in bond variables}

We consider the set of functions
$$
\begin{array}{l}
\{u,s,v,t\},\;\;\mbox{where}\;\;u:=x_1+x,\quad v:=x_2+x, \quad s:=x_1-x,\quad t:=x_2-x,\\
\{u_2,s_2,v_1,t_1\},\;\;\mbox{where}\;\;u_2:=x_{12}+x_2,\quad v_1:=x_{12}+x_1, \quad s_2:=x_{12}-x_2,\quad t_1:=x_{12}-x_1.
\end{array}
$$
These functions  are not independent,  the following relations hold
\be \label{depa}
u_2+u=v_1+v,\quad s_2-s=t_1-t,\quad t+u=s+v.
\ee
Following the procedure described in the previous subsection, by using these  sets of functions  for any of the quad-equations presented in the introduction, apart $Q4$ and  $A2\footnote{Applying the above procedure to $Q4$ and $A2$ we obtain difference systems which are not multidimensional compatible unless we take in account the relations $(\ref{depa})$ as well. So, although we can consider the derived systems as members of the $a-$list, we chose not to since they can be incorporated into the corresponding quad-graph equation listed in the Introduction.},$ we obtain
\be \label{tar-4}
u_2=g^1(s,v,t),\quad s_2=g^2(u,v,t),\quad v_1=g^3(u,s,t),\quad t_1=g^4(u,s,v),
\ee
where $g^i,\;i=1,2,3,4$ are appropriate rational functions depending on the indicated variables. The system of difference equations $(\ref{tar-4})$ together with the relations $(\ref{depa}),$ is clearly integrable since  it is the underlying quad-equation under consideration re-written as a difference system in bond variables.
 Again, if we consider just the systems $(\ref{tar-4})$ without the relations $(\ref{depa}),$ we have a more general system than the underlying quad-equation and this  system is integrable. We refer to the list of these superior systems as the $a-$list of integrable difference systems in bond variables.

As an example let us derive $m\mathcal{Q}2$ from $Q2.$  $Q2$ in terms of $u_2,s,v,t$ reads:
\be \label{q2u}
u_2=v+(p-q)\frac{pq(p^2-pq+q^2)+st-pq(s-t+2v)}{pq(p-q)+pt-qs}
\ee
 Also $Q2$ in terms of $s_2,u,v,t,$ of $v_1,u,s,t,$ and in terms of $t_1,u,v,s,$ respectively reads:
\be \label{q2svt}
\begin{array}{l}
s_2={\dis p\frac{q(p-q)(p^2-pq+q^2-t-u-v)+t(u-v)}{(p-q)(pq+t)-q(u-v)}},\\
v_1={\dis u+(p-q)\frac{pq(p^2-pq+q^2)+st-pq(t-s+2u)}{pq(p-q)+pt-qs}},\\
t_1={\dis q\frac{p(p-q)(p^2-pq+q^2-s-u-v)+s(u-v)}{(p-q)(pq+s)-p(u-v)}}.
\end{array}
\ee
Equations $(\ref{q2u}),(\ref{q2svt}),$ without the relations $(\ref{depa}),$ constitutes the $a\mathcal{Q}2$ difference system in bond variables.
Working similarly with the remaining members of the list of quad-equations, we obtain difference systems  which constitute the {\it $a-$list}  of integrable difference systems in bond variables.

\begin{prop}\label{prop3.2}
The following systems of difference equations:
\begin{align*}
&\begin{aligned}
&X^i_j= Y^i-2\frac{\delta a b c+(bY^i-aX^j)(Y^i-Y^j)+cY^iX^j}{2bY^i-(a+b+c)Y^j-(a-b-c)X^j},&& \\
&Y^i_j=X^i-2\frac{\delta a b c+(bX^i-aX^j)(X^i+Y^j)+cX^iX^j}{2bX^i-(a-b+c)Y^j-(a+b-c)X^j},&&\\
&\mbox{where}\;\;a:=p^i-\frac{1}{p^i}, b:=p^j-\frac{1}{p^j}, c:=\frac{p^i}{p^j}-\frac{p^j}{p^i}, &&
\end{aligned} & (a\mathcal{Q}3^{\delta})\\
&\begin{aligned}
&X^i_j=X^j+(p^i-p^j)\frac{p^ip^j([p^i]^2-p^ip^j+[p^j]^2)+Y^iY^j-p^ip^j(Y^i-Y^j+2X^j)}{p^iY^j-p^jY^i+p^ip^j(p^i-p^j)},&&\\
&Y^i_j=p^i\frac{p^j(p^i-p^j)([p^i]^2-p^ip^j+[p^j]^2-X^i-X^j-Y^j)+Y^j(X^i-X^j)}{(p^ip^j+Y^j)(p^i-p^j)-p^j(X^i-X^j)},&&
\end{aligned} & (a\mathcal{Q}2)\\
&\begin{aligned}
&X^i_j=X^j+(p^i-p^j)\frac{\delta p^ip^j-Y^iY^j}{p^jY^i-p^iY^j},&& \\
&Y^i_j=p^i\frac{\delta p^j(p^j-p^i)+Y^j(X^i-X^j)}{p^iY^j-p^j(X^i-X^j+Y^j)},&&
\end{aligned} & (a\mathcal{Q}1^{\delta})\\
&\begin{aligned}
&X^i_j=p^i\frac{\delta p^j(p^i-p^j)+X^j(Y^j-Y^i)}{p^iX^j-p^j(Y^i+X^j-Y^j)},&& \\
&Y^i_j=-Y^j+(p^i-p^j)\frac{\delta p^ip^j-X^iX^j}{p^iX^j-p^jX^i},&&
\end{aligned} & (a\mathcal{A}1^{\delta})\\
&\begin{aligned}
&X^i_j=\frac{(p^i-p^j)\left(2\delta (p^i+p^j)-Y^iY^j\right)+\left((p^i+p^j)Y^i-2p^iY^j\right)X^j}{p^j(2Y^i+X^j-Y^j)-p^i(X^j+Y^j)},&& \\
&Y^i_j=\frac{2\delta ({p^j}^2-{p^i}^2)+p^iX^iY^j-p^iX^j(X^i+2Y^j)+p^jX^i(X^j+Y^j)}{p^i(X^j+Y^j)-p^j(2X^i-X^j+Y^j)},&&
\end{aligned} & (a\mathcal{H}3^{\delta})\\
&\begin{aligned}
&X^i_j=X^j-(p^i-p^j)\frac{p^i+p^j+Y^i-Y^j+2X^j}{p^i-p^j+Y^i-Y^j},&& \\
&Y^i_j=p^i-p^j-Y^j-2(p^i-p^j)\frac{p^i+X^i}{p^i-p^j+X^i-X^j},&&
\end{aligned} & (a\mathcal{H}2)\\
&\begin{aligned}
&X^i_j=X^j+\frac{p^i-p^j}{Y^i-Y^j},&&\\
&Y^i_j=-Y^i+\frac{p^i-p^j}{X^i-X^j},&&
\end{aligned} & (a\mathcal{H}1)
\end{align*}
where $i\neq j\in\{1,2\},$
\begin{enumerate}
\item admit the invariance conditions presented in Table \ref{table14}
\item it holds $X^i+Y^j-X^j-Y^i=0,\; i\neq j\in\{1,2\}$ if and only if $X^i_j+Y^i_j-X^j_i-Y^j_i=0,\; i\neq j\in\{1,2\}$
\item they can be extended to multi-dimensions by allowing $i\neq j\neq k\in\{1,2,\ldots,n\},$
 they are multi-dimensional compatible i.e. $X^i_{jk}=X^i_{kj}, \; Y^i_{jk}=Y^i_{kj}$ $i\neq j\neq k\in\{1,2,\ldots,n\}$ 
\end{enumerate}

\end{prop}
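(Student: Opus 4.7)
The plan is to follow the proof template established for Proposition~\ref{prop3.1} (and for the motivating $m\mathcal{H}1$ case). All three claims are of the same type: an algebraic identity is to be checked on explicit rational formulas, and the verification is direct in principle but intricate in practice. I would treat the seven members of the $a$-list uniformly, exploiting the common structural form of the right-hand sides (each $X^i_j$, $Y^i_j$ is a ratio of polynomials of low degree in the four variables $X^i,X^j,Y^i,Y^j$ with coefficients depending on $p^i,p^j$).

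For part (1), I would substitute the explicit expressions for $(X^i_j,Y^i_j)$ from each system into the candidate invariants listed in Table~\ref{table14} and verify that they are preserved. In the $m\mathcal{H}1$ case two invariant conditions arose (the ratio $s_2/t_1=s/t$ and an affine conservation law of the form $u_2-v_1=-(u-v)+\text{const}$); here one should expect analogous additive/affine invariants driving the existence of the two potentials $x$ and $z$ appearing in a proposition analogous to Proposition~\ref{prop:3.3}. The checks amount to clearing denominators and collecting in the ambient variables, best automated with computer algebra.

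For part (2), I would compute $X^i_j+Y^i_j-X^j_i-Y^j_i$ directly from the relevant system, and show that the numerator factors through $X^i+Y^j-X^j-Y^i$. This mirrors the structure of the consistency Lemma for the relation $tu=sv$ in the $m\mathcal{H}1$ analysis: the shifted constraint vanishes as a polynomial consequence of the unshifted one. The converse (``if'') direction follows from the rational invertibility of the companion map in each case, so it suffices to establish the ``only if'' direction by the same factorisation.

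Part (3), the multi-dimensional compatibility, is the main obstacle. The approach is to compute $X^i_{jk}$ (and $Y^i_{jk}$) by first applying the system in direction $j$ to obtain $X^i_j,Y^i_j,X^k_j,Y^k_j$, and then applying the system to the triple involving the index $k$, and to show that the resulting rational expression in $X^1,Y^1,X^2,Y^2,X^3,Y^3,p^1,p^2,p^3$ is symmetric under the interchange $j\leftrightarrow k$. The successful outcome is a closed formula of the same flavour as the one displayed for $m\mathcal{H}1$, where the combinations $\chi^{jk}:=X^j-X^k$, $\psi^{jk}:=Y^j-Y^k$, $\pi^{jk}:=p^j-p^k$ organise the expression into a manifestly skew/symmetric form. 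The main difficulty is sheer algebraic size: $a\mathcal{Q}3^\delta$ and $a\mathcal{Q}2$ have numerators of high total degree, and naive expansion is infeasible by hand. I would therefore reduce the workload by two devices. First, I would invoke the projective-linear form \eqref{lax-rep} (which, by the derivation in Section~\ref{Section:4}, applies verbatim to the $a$-list after replacing multiplicative by additive edge variables): the 3D-consistency of a system of form \eqref{sys-gen} is known to be equivalent to an associated Lax factorisation \cite{Bobenko2002,Veselov:2003b}, and verifying that factorisation is computationally lighter than direct substitution. Second, for the subcases that degenerate from parametrically heavier ones (e.g.\ $a\mathcal{Q}1^\delta$ from $a\mathcal{Q}2$, $a\mathcal{H}3^\delta$ from $a\mathcal{Q}3^\delta$), the compatibility of the coarser system automatically descends to the finer, so one only needs to perform the computation for the ``top'' members $a\mathcal{Q}3^\delta$ and $a\mathcal{Q}2$. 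With these reductions in place, the identities $X^i_{jk}=X^i_{kj}$ and $Y^i_{jk}=Y^i_{kj}$ collapse to a finite list of polynomial identities suitable for symbolic verification.
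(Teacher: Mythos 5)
Your overall strategy coincides with what the paper itself relies on: Proposition \ref{prop3.2} is stated without an explicit proof, and the intended argument is precisely the template of the motivating $m\mathcal{H}1$ example --- parts (1) and (2) by direct substitution and factorisation of the resulting numerators, part (3) by computing $X^i_{jk}$, $Y^i_{jk}$ and exhibiting their symmetry under $j\leftrightarrow k$, all of it routine (if heavy) computer algebra. So the main line of your proposal is sound and matches the paper's implicit proof.

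Two of your labour-saving devices for part (3), however, are not proofs as stated. First, verifying the Lax factorisation (\ref{lax-eq}) on solutions of the system does not by itself yield multi-dimensional compatibility: that implication requires the refactorisation to be \emph{unique}, i.e.\ the strong-Lax/$n$-factorisation property of Definition \ref{dfn111} and Appendix \ref{appC}. The paper is explicit about this distinction --- it is why ``strong Lax matrix'' is defined separately, and why Appendix \ref{appC} proves the $n$-factorisation property for $m^c\mathcal{H}1$ by a null-space argument before asserting that this route establishes 3D-compatibility. If you take the Lax route you must supply the analogous uniqueness argument for each Lax matrix of Table \ref{Table:2}; checking the factorisation identity alone proves nothing about consistency. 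Second, the claim that compatibility ``descends'' from $a\mathcal{Q}3^{\delta}$ and $a\mathcal{Q}2$ to the remaining members is established nowhere in the paper and is not automatic: the ABS degenerations are known for the quad equations, but you would need to exhibit explicit parameter and field limits carrying the bond systems (with the specific additive edge variables $u=x_1+x$, $s=x_1-x$, etc.) onto one another, check that these limits are nonsingular on the formulas for $X^i_j,Y^i_j$, and place all of $a\mathcal{H}1$, $a\mathcal{H}2$, $a\mathcal{H}3^{\delta}$ in such a cascade. Since your fallback is the direct symbolic computation, neither issue is fatal, but as written the two shortcuts cannot be used to discharge part (3). (A minor point on part (2): the converse direction comes from running the same factorisation argument on the inverse of the map $\phi:(u,s,v,t)\mapsto(u_2,s_2,v_1,t_1)$, not on the companion map, which reshuffles which variables are regarded as data.)
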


\begin{remark}
The point transformation $(X^i,Y^i)\mapsto \left((-1)^{1+m_1+m_2+\ldots+m_n}Y^i,(-1)^{1+m_1+m_2+\ldots+m_n}X^i\right)$ maps $a\mathcal{Q}1^{\delta}$ to $a\mathcal{A}1^{\delta},$ so these bond systems are point equivalent. Although $a\mathcal{Q}1^{\delta}$ is essentially the same as  $a\mathcal{A}1^{\delta}$, for historical reasons we decided to keep both bond systems as members of the $a-$list.
\end{remark}

\begin{remark}
For each one of the difference systems of the $a-$list one can associate a rational map. Namely,  for $n=2,$ by making the identifications $u:=X^1, v:=X^2, s:=Y^1, t:=Y^2$ and $u_2:=X^1_2, v_1:=X^2_1, s_2:=Y^1_2, t_1:=Y^2_1,$ to any of the difference systems presented earlier, we have an associated map $\phi: (u,s,v,t)\mapsto (u_2,s_2,v_1,t_1).$  Due to the compatibility of the difference systems, the companion  map $\phi_c: (u,s,v_1,t_1)\mapsto (u_2,s_2,v,t)$ of the map $\phi$, is a Yang-Baxter map (see \cite{ABSf}).
 In Appendix \ref{appB}, we present explicitly the corresponding Yang-Baxter maps associated with the difference systems of the $a-$list.

\end{remark}

\subsection{Reductions to known integrable quad-equations}

\begin{lemma}
The set of constraints $X^iY^j-X^jY^i=0,\; i\neq j=1,\ldots, n$ or the set $X^i_jY^i_j-X^j_iY^j_i=0,\; i\neq j=1,\ldots, n$ are consistent with any of the difference systems of equations of Proposition \ref{prop3.1}.
\end{lemma}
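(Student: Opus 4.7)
The plan is to establish the lemma by identifying the constraint $X^iY^j - X^jY^i = 0$ with the existence of a vertex potential $x$ satisfying $X^i = x_i x$ and $Y^i = x_i/x$, and then invoking the multi-dimensional consistency of the underlying quad equation. At any vertex where $X^iY^j = X^jY^i$ holds for all $i\neq j$, the ratios $X^i/Y^i$ agree across directions, so setting $x^2$ equal to this common value produces a vertex function (defined up to an overall sign at a base point) with $X^i = x_i x$ and $Y^i = x_i/x$, where $x_i$ denotes the shift of $x$ in direction $i$. Part~(2) of Proposition~\ref{prop3.1} already gives the equivalence of the two forms of the constraint appearing in the lemma, so it suffices to propagate either one.

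First I would impose the constraint on a Cauchy-type initial dataset---a staircase for $n=2$, and the analogous Cauchy surface in higher dimensions---and use the observation above to construct a potential $x$ on all initial vertices. Second, by the very construction of the $m$-list recalled at the start of Section~\ref{Section:4}, substituting $X^i = x_i x$ and $Y^i = x_i/x$ into any member of the list collapses the bond system to the underlying quad equation $Q(x,x_i,x_j,x_{ij};p^i,p^j) = 0$. Hence evolving the bond fields via the $m$-list is equivalent to evolving the potential via the quad equation.

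Third, every quad equation feeding the $m$-list is 3D-consistent, and therefore multi-dimensionally consistent, by the Adler--Bobenko--Suris classification recalled in the introduction. Consequently the potential $x$ extends uniquely from the initial data to every vertex of $\mathbb{Z}^n$, and the formulas $X^i = x_i x$, $Y^i = x_i/x$ then define the bond fields on every edge; a direct check gives $X^iY^j = x_i x\cdot x_j/x = x_i x_j = X^jY^i$ at every vertex, so the constraint holds globally.

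The main technical point is to verify that the outputs $X^i_j$, $Y^i_j$ produced by one step of the bond system from inputs satisfying the constraint actually agree with the potential expressions $X^i_j = x_{ij} x_j$ and $Y^i_j = x_{ij}/x_j$ on the correct edges. This is precisely what part~(2) of Proposition~\ref{prop3.1} delivers: the edge-product equality $X^i_jY^i_j = X^j_iY^j_i$ forces both products to coincide with the single vertex value $x_{ij}^2$ at the diagonally opposite corner of each quad, legitimising the potential interpretation after one quad step. Combined with the multi-dimensional compatibility of part~(3), this removes any order-of-shift ambiguity and reduces the lemma to the already established 3D-consistency of the underlying quad equations.
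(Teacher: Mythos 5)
Your proof is essentially correct, but it takes a genuinely different route from the paper's. The paper does not prove this lemma directly; its template is the proof of the corresponding lemma for $m\mathcal{H}1$ in Section~3, which is a direct algebraic verification: on two consecutive quads of a staircase one computes the ratio $t_1\tilde{u}_2/(\tilde{s}_2 v_1)$ as an explicit rational function of the initial edge data, simplifies it modulo the imposed constraints to $us/(\tilde{v}\tilde{t})$, and concludes using the corner relation $us=\tilde{v}\tilde{t}$. You instead integrate the constraint: you observe that $X^iY^j=X^jY^i$ is exactly the solvability condition for the vertex potential $X^i=x_ix$, $Y^i=x_i/x$, that under this substitution each member of the $m$-list collapses (by its very construction) to the parent quad equation, and that the ABS consistency of the parent equation then propagates the potential --- hence the constraint --- over the whole lattice. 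Your argument is uniform across all seven systems and avoids the case-by-case rational computation, which is a real gain; what it costs is self-containedness, since it leans entirely on the provenance of the $f^i$ as exact rewritings of $Q=0$ (so the lemma becomes, in effect, a restatement of the construction), whereas the paper's computation certifies the propagation directly from the published formulas. Both approaches also rely, for $n\ge 3$, on part~(3) of Proposition~4.3 (or on ABS multi-dimensional consistency) to make the evolution well defined before one can speak of the constraint holding globally.

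One point you should make explicit: the vertex condition $X^iY^j=X^jY^i$ alone only matches the ratios $X^i/Y^i$ of edges \emph{emanating} from a common vertex, i.e.\ it fixes $x^2$ at the convex corners of the initial staircase. To glue the potential along the whole initial path you also need the value $X^iY^i=x_i^2$ carried to the tip of an edge to agree with the ratio at the next concave corner --- this is precisely the relation $us=\tilde{v}\tilde{t}$, which is a constraint of the second family ($X^i_jY^i_j=X^j_iY^j_i$ for the quad below the corner), not of the first. So the initial data must carry the appropriate member of each family at each staircase vertex; part~(2) of the proposition lets you convert between them quad by quad, but it does not make one family at the convex corners imply the other at the concave ones. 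The paper's own computation uses this corner relation explicitly, and your potential construction needs it in exactly the same place; once it is stated, your argument goes through.
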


\begin{lemma}
The set of constraints $X^i+Y^j-X^j-Y^i=0,\; i\neq j=1,\ldots, n$ or the set $X^i_j+Y^i_j-X^j_i-Y^j_i=0,\; i\neq j=1,\ldots, n$ are consistent with any of the difference systems of equations of Proposition \ref{prop3.2}.
\end{lemma}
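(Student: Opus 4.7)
The plan is to mirror the strategy used in the proof of Lemma 3.4 for the multiplicative $m\mathcal{H}1$ constraint $tu=sv$, adapted to the additive setting. The two key ingredients are: (i) part (2) of Proposition \ref{prop3.2}, which already guarantees that within a single elementary quad the constraint $X^i+Y^j-X^j-Y^i=0$ on the input edges is equivalent to the same constraint on the output edges; and (ii) a cross-quad propagation argument on two adjacent quads sharing a common vertex.

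Concretely, as in Figure \ref{figx}, it suffices to show that if the input constraint $s+v-u-t=0$ holds on one quad and its tilde version $\tilde s+\tilde v-\tilde u-\tilde t=0$ on the adjacent quad, then the ``diagonal'' identity $t_1+\tilde u_2-\tilde s_2-v_1=0$ is forced. From the two hypotheses together with Proposition \ref{prop3.2}(2) we first obtain the output identities $s_2+v_1-u_2-t_1=0$ and $\tilde s_2+\tilde v_1-\tilde u_2-\tilde t_1=0$; matching along the shared edge supplies the auxiliary relation $s+u=\tilde v+\tilde t$. Substituting the explicit evolution formulae for the shifted variables reduces the residue $t_1+\tilde u_2-\tilde s_2-v_1$ to a rational function in the eight input bond variables, which after using the identities above should simplify to zero.

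A cleaner conceptual route, which I would present in parallel, observes that the system of constraints $X^i+Y^j-X^j-Y^i=0$ for all $i\neq j$ is equivalent to the requirement that $X^i-Y^i$ be independent of $i$. Setting $2x:=X^i-Y^i$ introduces a vertex potential so that $X^i=x_i+x$ and $Y^i=x_i-x$; under this substitution each member of the $a$-list reduces (by the very construction that produced it in Section \ref{Section:4}) to the corresponding underlying ABS quad-equation on $x$, whose multi-dimensional consistency is classical. Consistency of the constraint with the bond evolution then follows because the reduced evolution of $x$ is consistent in higher dimensions, and the substitution $X^i=x_i+x$, $Y^i=x_i-x$ intertwines the two.

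The main obstacle is that the direct first approach demands a separate algebraic verification for each of the seven members of the $a$-list, since the explicit rational formulae vary substantially in form; the potential-theoretic second approach bypasses this at the price of appealing to the known multi-dimensional consistency of the ABS list, which is acceptable since the reductions to those equations are precisely the content of the next proposition. I would therefore structure the proof as the short conceptual argument, accompanied by a remark that direct verification along the lines of Lemma 3.4 can be carried out member by member.
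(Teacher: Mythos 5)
The paper states this lemma bare, with no proof; the only template it supplies is the proof of the analogous consistency lemma for $m\mathcal{H}1$ in Section \ref{Section:3} (two adjacent quads sharing a vertex, part (2) of the relevant proposition, the shared-vertex relation, and a final rational simplification). Your first approach is exactly that template transposed to the additive setting and its skeleton is sound, but note two things. First, the output identity supplied by Proposition \ref{prop3.2}(2) with $i=1$, $j=2$ is $u_2+s_2-v_1-t_1=0$ (the additive analogue of $s_2u_2=t_1v_1$; on the potential submanifold both sides equal $2x_{12}$), not $s_2+v_1-u_2-t_1=0$ as you wrote --- the two differ by $2(u_2-v_1)$, so your version fails in general. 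Second, the step ``should simplify to zero'' is precisely the member-by-member computation that carries the content of the lemma; the paper's $m\mathcal{H}1$ lemma performs it explicitly for that one system, and it cannot be waved away for the other seven.

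The second approach, which you elect as the primary argument, has a genuine gap. The constraint $X^i+Y^j-X^j-Y^i=0$ only says that $X^i-Y^i$ is independent of $i$ at each vertex, i.e.\ it defines a function $2x$ on base points. It does \emph{not} yield $X^i=x_i+x$, $Y^i=x_i-x$: for that one also needs $X^i+Y^i$ to equal the value $2x_i$ of the \emph{same} function at the far endpoint, which is equivalent to the closure relations $u_2+u=v_1+v$, $s_2-s=t_1-t$ of \eqref{depa}. The paper emphasizes that these are not consequences of the bond system alone; they follow only once the constraint is known to hold on the relevant quads --- which is exactly what the lemma is supposed to establish. As written, your argument is therefore circular with the reduction proposition that follows the lemma. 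It can be repaired: impose the full constraint set on the initial staircase so that $x$ is well defined on the staircase vertices, evolve $x$ by the quad equation, and observe that by the very construction of the $a$-list each function $g^i$ maps potential-form input to the potential form of the quad-equation output, so the bond evolution preserves the potential submanifold quad by quad (multidimensional consistency of the quad equation is needed only for $n>2$). That restructured version of the conceptual argument is valid; the version you wrote is not.
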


For the rest of this subsection we consider $n=2$ and we make the identifications $u:=X^1, v:=X^2, s:=Y^1, t:=Y^2$ and $u_2:=X^1_2, v_1:=X^2_1, s_2:=Y^1_2, t_1:=Y^2_1.$

\begin{prop}
The systems of difference equations $m\mathcal{Q}4$, $m\mathcal{Q}3^{\delta}$, $m\mathcal{Q}1^{\delta}$, $m\mathcal{A}2$, $m\mathcal{A}1^{\delta}$, $m\mathcal{H}3^{\delta}$,and $ m\mathcal{H}1$ for $n=2$ of Proposition \ref{prop3.1} together with the constraint $tu=sv,$ or  with the constraint $s_2u_2=t_1v_1,$  reduce to the $Q4$, $Q3^{\delta}$, $Q1^{\delta}$, $A2$, $A1^{\delta}$, $H3^{\delta}$, and  $H1 $ integrable lattice equations respectively.
\end{prop}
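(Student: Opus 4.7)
The plan is to mimic the argument already carried out for $m\mathcal{H}1$ earlier in the paper. By construction, each system in the $m$-list was obtained by rewriting the corresponding quad-equation $Q(x,x_1,x_2,x_{12};p,q)=0$ using the parametrisation
\begin{align*}
u=x_1x,\quad v=x_2x,\quad s=x_1/x,\quad t=x_2/x,
\end{align*}
together with their shifted analogues. Thus it is enough to show that, under the imposed constraint, a common scalar potential $x$ exists for which this parametrisation holds; once the potential is recovered, the bond system automatically collapses to the corresponding ABS quad-equation.

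The existence of such a potential is equivalent to the three independent relations in (\ref{depm0}):
\begin{align*}
tu=sv,\quad u_2 u = v_1 v,\quad s_2/s = t_1/t.
\end{align*}
The first one is exactly the imposed constraint (and by statement (2) of Proposition \ref{prop3.1}, it is equivalent to $s_2 u_2 = t_1 v_1$ on solutions of the bond system, so we may choose either formulation). The remaining two relations should follow from combining the constraint with the invariance conditions tabulated in Table \ref{table13}: the situation is entirely analogous to $m\mathcal{H}1$, where the invariant $s_2/t_1 = s/t$ already encodes $s_2/s = t_1/t$, while the constraint $tu=sv$ propagated through the four bond equations produces $u_2 u = v_1 v$. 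For each member of the $m$-list one verifies the same pattern: one potentiality relation is immediate from the invariants of Table \ref{table13}, and the other follows by substituting the constraint into the appropriate pair of bond equations and simplifying.

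Once all three relations in (\ref{depm0}) hold throughout the $\mathbb{Z}^2$ graph, the compatibility conditions $u_2 u=v_1 v$ and $s_2/s=t_1/t$ are exactness conditions on the edges, and yield a well-defined scalar field $x$ on vertices (unique up to a global multiplicative constant) with $u=x_1 x$, $v=x_2 x$, $s=x_1/x$, $t=x_2/x$. Substituting this parametrisation into any of the four equations of the given bond system gives back, tautologically, the quad-equation from which the system was derived, completing the reduction in each of the seven cases.

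The main obstacle is the case-by-case algebraic verification that the constraint $tu=sv$, together with the specific invariants of Table \ref{table13}, forces the two remaining potentiality relations for each of the seven systems $m\mathcal{Q}4$, $m\mathcal{Q}3^{\delta}$, $m\mathcal{Q}1^{\delta}$, $m\mathcal{A}2$, $m\mathcal{A}1^{\delta}$, $m\mathcal{H}3^{\delta}$, $m\mathcal{H}1$. For the lower members of the list the computations are routine and structurally identical to those in the $m\mathcal{H}1$ case; the main computational burden is concentrated at the top of the list ($m\mathcal{Q}4$ and $m\mathcal{Q}3^{\delta}$), where the rational expressions for $u_2,s_2,v_1,t_1$ are considerably more involved, but the point-equivalences noted in the remark after Proposition \ref{prop3.1} (between $m\mathcal{Q}1^{\delta}$ and $m\mathcal{A}1^{\delta}$, and between $m\mathcal{A}2$ and $m\mathcal{Q}3^{0}$) reduce the number of genuinely distinct verifications.
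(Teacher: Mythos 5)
Your proposal is correct and follows essentially the same route as the paper: impose the constraint $tu=sv$, deduce the additional invariance relations $u_2u=v_1v$ and $s_2/s=t_1/t$ from the bond equations, use all three to recover a scalar potential $x$ with $u=x_1x$, $v=x_2x$, $s=x_1/x$, $t=x_2/x$, and observe that the system then collapses to the corresponding quad-equation. The paper's proof is just a terser version of this same argument (it likewise leaves the case-by-case algebraic verification to the reader), so no further comparison is needed.
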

\begin{proof}
If we impose $tu=sv,$ or   $s_2u_2=t_1v_1,$ then it can be easily shown that any of the $m\mathcal{Q}4$, $m\mathcal{Q}3^{\delta}$, $m\mathcal{Q}1^{\delta}$, $m\mathcal{A}2$, $m\mathcal{A}1^{\delta}$, $m\mathcal{H}3^{\delta}$,and $ m\mathcal{H}1$ satisfies the following additional invariant conditions:
\begin{align} \label{con22}
&u_2u=v_1v,& \frac{s_2}{s}=\frac{t_1}{t}&.
\end{align}
The relations (\ref{con22}) guarantee the existence of a potential function $x$ such that:
\begin{align} \label{con23}
&u=x_1x,&v=x_2x,&&s=\frac{x_1}{x},&&t=\frac{x_2}{x}.
\end{align}
Then $m\mathcal{Q}4$, $m\mathcal{Q}3^{\delta}$, $m\mathcal{Q}1^{\delta}$, $m\mathcal{A}2$, $m\mathcal{A}1^{\delta}$, $m\mathcal{H}3^{\delta}$,and $ m\mathcal{H}1$ written in terms of $x,$ coincide with the $Q4$, $Q3^{\delta}$, $Q1^{\delta}$, $A2$, $A1^{\delta}$, $H3^{\delta}$ and $H1$ quad-equations.
\end{proof}

\begin{prop}
The systems of difference equations $ a\mathcal{Q}3^{\delta}$, $a\mathcal{Q}2$, $a\mathcal{Q}1^{\delta}$,  $a\mathcal{A}1^{\delta}$, $a\mathcal{H}3^{\delta}$, $a\mathcal{H}2$ and $a\mathcal{H}1$ for $n=2$ of Proposition \ref{prop3.2} together with the constraint $t+u=s+v,$ or  with the constraint $s_2+u_2=t_1+v_1,$  reduce to the $ Q3^{\delta}$, $Q2$, $Q1^{\delta}$,  $A1^{\delta}$, $H3^{\delta}$, $H2$ and $H1 $ integrable lattice equations respectively.
\end{prop}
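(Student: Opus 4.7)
The plan is to mirror, step by step, the proof of the analogous result for the $m$-list (the previous proposition), replacing multiplicative relations with additive ones throughout. Concretely, I would establish three things in sequence: that the stated constraint is preserved under the dynamics, that it forces enough extra ``exactness'' relations to guarantee the existence of a potential $x$, and finally that each $a$-system, expressed in terms of this potential, collapses to the corresponding quad-equation.

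The first step is to assume $t+u=s+v$ (the equivalence with $s_2+u_2=t_1+v_1$ follows from the constraint-consistency lemmas just stated). Specializing each system in the $a$-list, I would verify by direct substitution that the additional relations
\begin{align*}
u_2+u=v_1+v, \qquad s_2-s=t_1-t,
\end{align*}
hold identically. This is the additive counterpart of the identities $u_2u=v_1v$ and $s_2/s=t_1/t$ used in the $m$-list proof; for $a\mathcal{H}1$ the relations are in fact already listed as invariants in the analogue of Proposition \ref{prop:3.3}, and the remaining cases are routine algebra from the explicit formulas in Proposition \ref{prop3.2}.

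The second step is the construction of a potential. The relation $u_2+u=v_1+v$ is precisely the closure condition around an elementary quadrangle for an additive edge field, so it guarantees the existence of a vertex function $x$ with $u=x_1+x$ and $v=x_2+x$ (up to an affine normalization absorbed into $p,q$). Likewise $s_2-s=t_1-t$ together with $t+u=s+v$ forces $s=x_1-x$ and $t=x_2-x$ with the same $x$. At this point the four bond variables $(u,s,v,t)$ are exactly the substitutions used in Section \ref{Section:4} to derive the $a$-list from the quad-equations in the first place.

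The third and concluding step is the reduction itself. Since the $a$-list was obtained by rewriting each member of the ABS list in the coordinates $(u,s,v,t)=(x_1+x,x_1-x,x_2+x,x_2-x)$, substituting back $u,s,v,t$ and $u_2,s_2,v_1,t_1$ in terms of $x,x_1,x_2,x_{12}$ recovers $Q3^{\delta}$, $Q2$, $Q1^{\delta}$, $A1^{\delta}$, $H3^{\delta}$, $H2$, $H1$ respectively. The main obstacle is really only bookkeeping: one must check case by case that the four scalar equations in the $a$-system collapse consistently to a single quad-equation under the potentiation, rather than to an over-determined system. The quickest way to dispatch this uniformly is to observe that each $a$-system was defined via the $4$-fold rewriting of a single quad-equation, so once the potential $x$ exists, any one of the four equations is equivalent to the original quad-equation, and the other three are automatically equivalent to it via the exactness relations just established.
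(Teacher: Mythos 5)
Your proposal is correct and follows essentially the same route as the paper: impose the constraint, derive the additional invariance relations $u_2+u=v_1+v$ and $s_2-s=t_1-t$ (written in the paper as $u_2-v_1=v-u$ and $s_2-t_1=s-t$), introduce the potential $x$ with $u=x_1+x$, $v=x_2+x$, $s=x_1-x$, $t=x_2-x$, and observe that the system collapses to the corresponding quad-equation. Your extra remarks on why $t+u=s+v$ identifies the two potentials and why the four scalar equations reduce consistently to a single quad-equation only make explicit what the paper leaves implicit.
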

\begin{proof}
If we impose $t+u=s+v,$ or   $s_2+u_2=t_1+v_1,$ then it can be easily shown that any of the $ a\mathcal{Q}3^{\delta}$, $a\mathcal{Q}2$, $a\mathcal{Q}1^{\delta}$,  $a\mathcal{A}1^{\delta}$, $a\mathcal{H}3^{\delta}$, $a\mathcal{H}2$ and $a\mathcal{H}1$ satisfies the following additional invariant conditions:
\begin{align} \label{con22}
&u_2-v_1=v-u,& s_2-t_1=s-t&.
\end{align}
The relations (\ref{con22}) guarantee the existence of a potential function $x$ such that:
\begin{align} \label{con23}
&u=x_1+x,&v=x_2+x,&&s=x_1-x,&&t=x_2-x.
\end{align}
Then $ a\mathcal{Q}3^{\delta}$, $a\mathcal{Q}2$, $a\mathcal{Q}1^{\delta}$,  $a\mathcal{A}1^{\delta}$, $a\mathcal{H}3^{\delta}$, $a\mathcal{H}2$ and $a\mathcal{H}1$ written in terms of $x,$ coincide with the $ Q3^{\delta}$, $Q2$, $Q1^{\delta}$,  $A1^{\delta}$, $H3^{\delta}$, $H2$ and $H1$ quad-equations.
\end{proof}

\section{Lax pairs of the $m-$list and the $a-$list of difference systems in Bond Variables} \label{Section:5}

We recall the following definition:
\begin{dfn} \label{dfn111}
 A matrix  $L(u,s;p,\lambda)$ is called a Lax matrix of  any of the difference systems of Propositions \ref{prop3.2} and \ref{prop3.1} with $n=2$, if the equations
\be \label{dif-g-s}
u_2=f^1(s,v,t),\quad s_2=f^2(u,v,t),\quad v_1=f^3(u,s,t),\quad t_1=f^4(u,s,v),
\ee
where $f^i,\;\;i=1,\ldots , 4$ are rational functions of the indicated variables,
 imply that
\begin{equation} \label{lax-eq}
L({u_2,s_2};p,\lambda)\, L({v,t};q,\lambda) =
L(v_1,t_1;q,\lambda)\, L({u,s};p,\lambda)\,
\end{equation}
holds for all $\lambda$.
$L(u,s;p,\lambda)$ is called a strong Lax matrix of
$(\ref{dif-g-s})$, if the converse also holds. \\
\end{dfn}
Relation (\ref{lax-eq}) constitutes the Lax equation of the difference system in bond variables (\ref{dif-g-s}) and it stands as the compatibility condition of the linear system
$$
\Psi_1=L({u,s};p,\lambda)\Psi,\quad \Psi_2=L({v,t};q,\lambda) \Psi.
$$
\begin{prop} \label{prop_lax1}
The Lax matrices of  the difference systems of Proposition \ref{prop3.1}  reads
$$
L(u,s;p,\lambda) =
\frac{1}{\sqrt{A}}\begin{bmatrix}
    {\bf 0}     & {\bf L}^1              \\
    {\bf L}^2     & {\bf 0}
\end{bmatrix},
$$
where $\det {\bf L}^1=\det {\bf L}^2 =c A,$ with $c$ some non-zero constant. We list ${\bf L}^1, {\bf L}^2$ and $A$  in the  Table \ref{Table:1}\\

\begin{longtable}{|l|l|l|}
 \caption{The  matrices  ${\bf L}^1, \; {\bf L}^2$ and the function $A$ associated with  the difference systems of Proposition \ref{prop3.1}} \label{Table:1} \\ \toprule
 $m\mathcal{Q}4$ & $\begin{array}{l} \vspace{0.3cm}
      {\bf L}^1=\begin{bmatrix}
   \left(\lambda +\frac{p \Lambda -P \lambda}{1-p^2 \lambda ^2} s \right) u & -\left(u+\frac{p \Lambda -P \lambda}{1-p^2 \lambda ^2} \lambda \right) p s  \\
   \left(\frac{p \Lambda -P \lambda}{1-p^2 \lambda ^2} \lambda u +1 \right) p & -s \lambda -\frac{p \Lambda -P \lambda }{1-p^2 \lambda ^2} \\
\end{bmatrix}  \vspace{0.3cm} \\
       {\bf L}^2= \begin{bmatrix}
   -\lambda -\frac{p \Lambda -P \lambda}{1-p^2 \lambda ^2}  s                 &  \left(u+\frac{p \Lambda -P \lambda}{1-p^2 \lambda ^2} \lambda \right) p  \\
 -\left(\frac{p \Lambda -P \lambda}{1-p^2 \lambda ^2} \lambda u +1\right) p s &  \left(s \lambda +\frac{p \Lambda -P \lambda }{1-p^2 \lambda^2}\right) u
\end{bmatrix}\vspace{0.3cm} \\
       \mbox{where} \;\; \Lambda^2:=\lambda^4-\gamma \lambda^2+1,\;\;P^2:=p^4-\gamma p^2+1
       \end{array}$ & $A=p^2 s - u      - s u (s - p^2 u)       + 2 P s u$   \\       \midrule

     $m\mathcal{Q}3^{\delta}$   & $\begin{array}{l} \vspace{0.3cm}
      {\bf L}^1= \begin{bmatrix}
    u\left(\frac{1}{\lambda}-\lambda+s\left(\frac{\lambda}{p}-\frac{p}{\lambda}\right)\right)& s(p-\frac{1}{p})\left(u+\delta(\frac{\lambda}{p}-\frac{p}{\lambda})(\lambda-\frac{1}{\lambda})\right)   \\
    \frac{1}{p}-p     & \frac{p}{\lambda}-\frac{\lambda}{p}+s (\lambda-\frac{1}{\lambda})\end{bmatrix} \vspace{0.3cm} \\
       {\bf L}^2= \begin{bmatrix}
    \lambda-\frac{1}{\lambda}+s\left(\frac{p}{\lambda}-\frac{\lambda}{p}\right)&
    (p-\frac{1}{p})\left(-u+\delta(\frac{p}{\lambda}-\frac{\lambda}{p})(\lambda-\frac{1}{\lambda})\right)   \\
    s\left(p-\frac{1}{p}\right)     & u\left(\frac{\lambda}{p}-\frac{p}{\lambda}-s (\lambda-\frac{1}{\lambda})\right)
    \end{bmatrix}
       \end{array}$ & $A=\delta s (1-p^2)^2+pu(p-s)(1-ps)$\\  \midrule

 $m\mathcal{Q}1^{\delta}$ & $\begin{array}{l} \vspace{0.3cm}
      {\bf L}^1= \begin{bmatrix}
                   u\left(\lambda(1-s)+ps\right)& p s \left(\delta \lambda(\lambda-p)-u)\right)\\
                   p& \lambda(1-s)-p
                     \end{bmatrix} \vspace{0.3cm} \\
       {\bf L}^2= \begin{bmatrix}
                   \lambda(s-1)-ps & p  \left(\delta \lambda(p-\lambda)+u\right)\\
                   -ps& u\left(\lambda(s-1)+p\right)
                     \end{bmatrix}
       \end{array}$ & $A= \delta p^2s-u(1-s)^2$\\        \midrule

  $m\mathcal{A}2$ & $\begin{array}{l} \vspace{0.3cm}
      {\bf L}^1=  \begin{bmatrix}
                  \left(\frac{1}{p}-p\right)u& \left(\frac{p}{\lambda}-\frac{\lambda}{p}\right)s+\left(\lambda-\frac{1}{\lambda}\right)su\\
                  \frac{1}{\lambda}-\lambda+\left(\frac{\lambda}{p}-\frac{p}{\lambda}\right)u&\left(p-\frac{1}{p}\right)s
                   \end{bmatrix}  \vspace{0.3cm} \\
       {\bf L}^2=\begin{bmatrix}
                  \frac{1}{p}-p& \frac{p}{\lambda}-\frac{\lambda}{p}+\left(\lambda-\frac{1}{\lambda}\right)u\\
                  \left(\frac{1}{\lambda}-\lambda+\left(\frac{\lambda}{p}-\frac{p}{\lambda}\right)u\right)s&\left(p-\frac{1}{p}\right)su
                   \end{bmatrix}
       \end{array}$ & $A= s(p-u)(pu-1)$\\       \midrule

  $m\mathcal{A}1^{\delta}$ & $\begin{array}{l} \vspace{0.3cm}
      {\bf L}^1=\begin{bmatrix}
                u\left(\lambda(1+s)-ps\right)&ps \left(\delta\lambda(p-\lambda)-u\right)\\
                p&p-\lambda(1+s)
                \end{bmatrix} \vspace{0.3cm}\\
       {\bf L}^2=\begin{bmatrix}
                  \lambda(1+s)-ps&p(\delta  \lambda(p-\lambda)-u)\\
                  ps&u\left(p-\lambda(1+s)\right)
                  \end{bmatrix}
       \end{array}$ & $A=\delta p^2s-u(1+s)^2$ \\       \midrule

 $ m\mathcal{H}3^{\delta}$ & $\begin{array}{l} \vspace{0.3cm}
      {\bf L}^1= \begin{bmatrix}
                   \lambda u& s\left(\delta (\lambda^2-p^2)-pu\right)\\
                   p&-\lambda s
                  \end{bmatrix} , \;\;
       {\bf L}^2=\begin{bmatrix}
                  -\lambda&\delta (p^2-\lambda^2)+pu\\
                  -ps&\lambda su
                 \end{bmatrix}
       \end{array}$ & $A=s(\delta p+u) $ \\       \midrule

 $ m\mathcal{H}1$ & $\begin{array}{l} \vspace{0.3cm}
      {\bf L}^1=\begin{bmatrix}
                 -u&s\left(u+p-\lambda\right)\\
                 -1&s
      \end{bmatrix},\;\;{\bf L}^2=\begin{bmatrix}
                 -1&u+p-\lambda\\
                 -s&-su
                  \end{bmatrix} \vspace{0.3cm}\\
       \end{array}$ & $A=1$ \\    \bottomrule

\end{longtable}

\end{prop}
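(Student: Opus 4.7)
The plan is to exploit the special block anti-diagonal structure of the Lax matrices. Writing
$$L(u,s;p,\lambda)=\frac{1}{\sqrt{A(u,s)}}\begin{bmatrix}\mathbf{0} & \mathbf{L}^1 \\ \mathbf{L}^2 & \mathbf{0}\end{bmatrix},$$
the product of two such matrices is block-diagonal. Consequently, the Lax equation \eqref{lax-eq} decouples into three conditions: two $2\times 2$ matrix identities
\begin{align*}
\mathbf{L}^1(u_2,s_2;p,\lambda)\,\mathbf{L}^2(v,t;q,\lambda) &= \mathbf{L}^1(v_1,t_1;q,\lambda)\,\mathbf{L}^2(u,s;p,\lambda),\\
\mathbf{L}^2(u_2,s_2;p,\lambda)\,\mathbf{L}^1(v,t;q,\lambda) &= \mathbf{L}^2(v_1,t_1;q,\lambda)\,\mathbf{L}^1(u,s;p,\lambda),
\end{align*}
together with the scalar gauge consistency $A(u_2,s_2)\,A(v,t) = A(v_1,t_1)\,A(u,s)$. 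The latter is, in fact, a consequence of the two matrix identities via the determinant condition $\det \mathbf{L}^1 = \det \mathbf{L}^2 = cA$ recorded in the statement, so only the two $2\times 2$ identities need to be checked.

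The key observation for the matrix identities is that the blocks $\mathbf{L}^1, \mathbf{L}^2$ encode the Möbius (projective) actions underlying the bond systems. As noted in the proof of Proposition \ref{prop3.1}, every member of the $m$-list can be cast in the special form \eqref{sys-gen}: each update rule $u_2=f^1(s,v,t)$, $s_2=f^2(u,v,t)$, $v_1=f^3(u,s,t)$, $t_1=f^4(u,s,v)$ is a Möbius transformation in its remaining free variable. The $2\times 2$ blocks are natural ``spectral-parameter dressings'' of the matrices $M,L$ introduced there: comparing the $(\lambda\to p,\,\lambda\to q)$ limits, the numerators and denominators of $u_2,s_2,v_1,t_1$ in \eqref{sys-gen} should be recovered as the entries of the appropriate blocks. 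This identification pins down $\mathbf{L}^1,\mathbf{L}^2$ up to overall scaling, and the scaling is then fixed by the determinant normalization.

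Having produced the candidate Lax matrices, I would verify the two $2\times 2$ identities member by member. For the rational members $m\mathcal{Q}1^\delta,\, m\mathcal{A}1^\delta,\, m\mathcal{A}2,\, m\mathcal{H}3^\delta,\, m\mathcal{H}1$, substituting the updates from Proposition \ref{prop3.1} reduces each identity to a polynomial identity in $\lambda$ (and a Laurent polynomial in $\lambda$ for $m\mathcal{Q}3^\delta$ and $m\mathcal{A}2$), which can be cleared to an algebraic identity in $(u,s,v,t,p,q)$; each such identity is a direct, if tedious, verification, naturally delegated to a computer algebra system. The general Lax-pair theorem invoked in the proof of Proposition \ref{prop3.1}(4), based on the arguments of \cite{FrankABS2} for $3$D-compatible systems in projective form, provides an a priori guarantee that some Lax matrix exists; the content of the present proposition is the explicit form listed in Table \ref{Table:1}.

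The main obstacle is the elliptic member $m\mathcal{Q}4$. There the parameter dependence is transcendental: $\Lambda, P, Q$ are not polynomials in $\lambda,p,q$ but are constrained by the elliptic curve equations $\Lambda^2=\lambda^4-\gamma\lambda^2+1$, $P^2=p^4-\gamma p^2+1$, $Q^2=q^4-\gamma q^2+1$. The expression $\tfrac{p\Lambda-P\lambda}{1-p^2\lambda^2}$ and its $\lambda\leftrightarrow q$ counterparts must be manipulated using the standard addition formulae on $\mathcal{E}$; the verification then hinges on reducing every occurrence of $\Lambda^2,P^2,Q^2$ via the curve equations and combining cross terms by the addition law. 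Once this is handled, the remaining members follow by the same (but algebraically simpler) pattern, and the proposition is established.
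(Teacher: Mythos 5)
Your overall route --- read the $2\times2$ blocks off the projective form \eqref{sys-gen} as in the paper's proof of Proposition \ref{prop3.1}(4), then verify the Lax equation block by block (by computer algebra for the rational members, reducing modulo the curve relations for $m\mathcal{Q}4$) --- is the one the paper intends; the paper itself offers no separate proof of this proposition. However, the precise decomposition you set up is wrong, and your verification program would fail at the first nontrivial member. The two un-normalized $2\times2$ identities you display do not hold exactly, and the ``scalar gauge consistency'' $A(u_2,s_2)A(v,t)=A(v_1,t_1)A(u,s)$ that you claim follows from them is false in general. Concretely, for $m\mathcal{H}3^{\delta}$ the map gives $\delta p+u_2=\frac{(qt-ps)(\delta q+v)}{pt-qs}$, $\delta q+v_1=\frac{(ps-qt)(\delta p+u)}{qs-pt}$ and $s_2/t_1=s/t$, whence
$$\frac{A(u_2,s_2)\,A(v,t)}{A(v_1,t_1)\,A(u,s)}=\left(\frac{\delta q+v}{\delta p+u}\right)^{2}=H^{-2},$$
which is not $1$ in general (this is just the invariance condition $\widehat H=1/H$ of Table \ref{table13}, squared). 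Since taking determinants of your two exact block identities would force the $A$-product equality via $\det\mathbf{L}^i=cA$, those identities cannot hold as stated.

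What is true, and what must actually be verified, is that the block products agree only up to a common scalar factor $\mu$,
$$\mathbf{L}^1(u_2,s_2;p,\lambda)\,\mathbf{L}^2(v,t;q,\lambda)=\mu\;\mathbf{L}^1(v_1,t_1;q,\lambda)\,\mathbf{L}^2(u,s;p,\lambda),$$
with $\mu^2=A(u_2,s_2)A(v,t)/\bigl(A(v_1,t_1)A(u,s)\bigr)$ forced by the determinant normalization; the prefactor $1/\sqrt{A}$ exists precisely to absorb this scalar and render $\det L$ independent of the fields. So the member-by-member check is one of \emph{proportionality} of the matrix products (together with a consistent choice of branch of the square root), not of equality, and the $A$-product relation is not a consequence to be discarded but the definition of the scalar that must be matched. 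With that correction your plan goes through, but as written the central step of the verification is incorrect.
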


\begin{prop} \label{prop_lax1}
The Lax matrices of  the difference systems of Propositions  \ref{prop3.2} reads
$$
L(u,s;p,\lambda) =
\frac{1}{\sqrt{A}}\begin{bmatrix}
    {\bf 0}     & {\bf L}^1              \\
    {\bf L}^2     & {\bf 0}
\end{bmatrix},
$$
where $\det {\bf L}^1=\det {\bf L}^2 =c A,$ with $c$ some non-zero constant. We list ${\bf L}^1, {\bf L}^2$ and $A$  in the  Table \ref{Table:2}\\

\begin{longtable}{|l|l|l|}
 \caption{The  matrices  ${\bf L}^1, \; {\bf L}^2$ and the function $A$ associated with  the difference systems of Proposition \ref{prop3.2}} \label{Table:2}
 \\ \toprule
 $ a\mathcal{Q}3^{\delta}$& $\begin{array}{l} \vspace{0.3cm}
      {\bf L}^1= \begin{bmatrix}
                 \Lambda(s-u)+\left(-P+\frac{\lambda}{p}-\frac{p}{\lambda}\right)(s+u)&
                 2\Lambda\left(\delta P \left(\frac{\lambda}{p}-\frac{p}{\lambda}\right) +su\right)\\
                 -2P&\left(P+\frac{\lambda}{p}-\frac{p}{\lambda}\right)(s-u)+\Lambda(s+u)
                \end{bmatrix}\vspace{0.3cm}\\
                 {\bf L}^2=\begin{bmatrix}
               -\Lambda (s-u)-\left(P+\frac{\lambda}{p}-\frac{p}{\lambda}\right)(s+u)&2\Lambda\left(\delta P\left(\frac{p}{\lambda}-\frac{\lambda}{p}\right)+su\right)\\
               2P&\left(P+\frac{p}{\lambda}-\frac{\lambda}{p}\right)(s-u)-\Lambda(s+u)
                    \end{bmatrix} \vspace{0.3cm} \\
       \mbox{where} \;\; \Lambda:=\lambda-\frac{1}{\lambda},\;\;P:=p-\frac{1}{p}
       \end{array}$ & $\begin{array}{l}
       A=4\delta (1-p^2)^2\\
       \;\;+p(1+p)^2s^2\\
       \;\;-p(1-p)^2u^2 \end{array}$\\       \midrule

   $ a\mathcal{Q}2$      & $\begin{array}{l} \vspace{0.3cm}
      {\bf L}^1= \begin{bmatrix}
                 (\lambda  -p)(\lambda p-s)+pu&\lambda(-\lambda p+p^2+s)\left(p(p-\lambda)-u\right)+\lambda^3p(p-\lambda)\\
                 p&\lambda\left(p(p-\lambda)-s\right)
                 \end{bmatrix} \vspace{0.3cm}\\
       {\bf L}^2= \begin{bmatrix}
                  \lambda(-\lambda p+p^2+s)&\lambda(-\lambda p+p^2-s)\left(p(\lambda-p)+u\right)+\lambda^3p(\lambda-p)\\
                  -p&(\lambda-p)(\lambda p+s)+pu
                  \end{bmatrix}
       \end{array}$ &$ A=p^4+s^2-2p^2u$\\   \midrule

$ a\mathcal{Q}1^{\delta}$ & $\begin{array}{l} \vspace{0.3cm}
      {\bf L}^1=\begin{bmatrix}
                -\lambda s+p(s+u)&\lambda\left(\delta p(\lambda-p)-su\right)\\
                p&-\lambda s
      \end{bmatrix} \vspace{0.3cm}\\
       {\bf L}^2= \begin{bmatrix}
                  \lambda s& \lambda\left(\delta p(p-\lambda)-su\right)\\
                  -p& \lambda s-p(s-u)
               \end{bmatrix}
       \end{array}$ & $A=\delta p^2-s^2 $\\       \midrule

$ a\mathcal{A}1^{\delta}$ & $\begin{array}{l} \vspace{0.3cm}
      {\bf L}^1= \begin{bmatrix}
                 -\lambda u& \lambda\left(\delta p(\lambda-p)+su \right)\\
                 -p&p(s-u)+\lambda u
          \end{bmatrix} \vspace{0.3cm}\\
       {\bf L}^2= \begin{bmatrix}
                -\lambda u+p (s+u)& \lambda\left(\delta p (\lambda-p)-su\right)\\
                -p& \lambda u
               \end{bmatrix}
       \end{array}$ & $A=\delta p^2-u^2 $\\       \midrule

$ a\mathcal{H}3^{\delta}$ & $\begin{array}{l} \vspace{0.3cm}
      {\bf L}^1= \begin{bmatrix}
               \lambda (s-u)-p(s+u)& 2\delta (p^2-\lambda^2)+2\lambda s u\\
               -2p& p(s-u)+\lambda (s+u)
                              \end{bmatrix} \vspace{0.3cm} \\
       {\bf L}^2= \begin{bmatrix}
             \lambda(s-u)+p(s+u)& 2\delta (p^2-\lambda^2)-2\lambda s u \\
            -2p & -p(s-u)+\lambda (s+u)
          \end{bmatrix}
       \end{array}$ & $A=4\delta p+u^2-s^2 $\\ \midrule

$ a\mathcal{H}2$ & $\begin{array}{l} \vspace{0.3cm}
      {\bf L}^1=\begin{bmatrix}
               \lambda-p-u&\lambda^2+(s-p)(p+u)-\lambda(s-u)\\
               -1&-\lambda+p+s
              \end{bmatrix} \vspace{0.3cm}\\
       {\bf L}^2=\begin{bmatrix}
                 \lambda-p+s&\lambda^2-(p+s)(p+u)+\lambda(s+u)\\
                 -1&-\lambda+p+u
                  \end{bmatrix}
       \end{array}$ & $A=p+u$ \\ \midrule

$ a\mathcal{H}1$ & $\begin{array}{l}\vspace{0.3cm}
      {\bf L}^1=\begin{bmatrix}
              -u&-\lambda+p+su\\
              -1&s
              \end{bmatrix},\;\;
       {\bf L}^2= \begin{bmatrix}
                    s&-\lambda+p-su\\
                    -1&u
                     \end{bmatrix}
       \end{array}$ & $A=1$ \\       \bottomrule

\end{longtable}

\end{prop}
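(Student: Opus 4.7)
The plan is to follow exactly the route sketched in the proof of statement $(4)$ of the opening Proposition on $m\mathcal{H}1$, which works uniformly for any bond system in the special form $(\ref{sys-gen})$. First I would verify that each of the seven members of the $a$-list fits that form: one reads $u_2, s_2, v_1, t_1$ from the evolution equations in Proposition \ref{prop3.2} and observes that, after clearing denominators, each is a M\"obius action in the single variable indicated, with coefficients depending only on the remaining two bond variables and on the parameters. This casts every member of the $a$-list in the projective form $(\ref{lax-rep})$ with some matrices $M(v,t,q;p)$ and $L(u,s;p;q)$, whose explicit entries are the polynomials $a^i, b^i, c^i, d^i$ coming from the clearing.

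Second, I would invoke the general principle of \cite{Bobenko2002,Veselov:2003b} already used in the proof of the $m\mathcal{H}1$ proposition: because Proposition \ref{prop3.2}(3) guarantees multi-dimensional compatibility of each $a$-system, the transpose $M^T(v,t,q;\lambda)$ (equivalently $L(u,s;p;\lambda)$) obtained after replacing the second spectral parameter by an auxiliary $\lambda$ serves as a Lax matrix in the sense of Definition \ref{dfn111}. This step is structural and does not require case-by-case work: the Lax equation $(\ref{lax-eq})$ is a direct consequence of the projective-action reformulation of the compatibility condition.

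Third, I would carry out the gauge adjustment that brings the Lax matrix into the block antidiagonal form displayed in Table \ref{Table:2}. The two $2\times 2$ blocks ${\bf L}^1$ and ${\bf L}^2$ correspond to the two M\"obius actions (the one determining $u_2$ from $s$ at fixed $(v,t)$ and the one determining $s_2$ from $u$ at fixed $(v,t)$); the gauges $\mathcal A^i, \mathcal B^i$ in $(\ref{lax-rep})$ can be fixed by the requirement $\det {\bf L}^1 = \det {\bf L}^2$, which forces the common normalization $\sqrt{A}$ with the function $A$ tabulated in the right column. This gives both the explicit expressions for ${\bf L}^1, {\bf L}^2$ and the function $A$.

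Finally, since Steps~1--3 only determine the Lax matrices up to an overall scalar gauge, I would close the proof by direct verification: for each of the seven cases one substitutes the evolution equations from Proposition \ref{prop3.2} into the product $L(u_2,s_2;p,\lambda)\,L(v,t;q,\lambda)$ and checks that it equals $L(v_1,t_1;q,\lambda)\,L(u,s;p,\lambda)$ as a rational matrix in $\lambda$ (and in the bond variables/parameters). The main obstacle is precisely this last bookkeeping step: although conceptually routine, the check for $a\mathcal{Q}3^{\delta}$ and $a\mathcal{Q}2$ involves bulky rational expressions and the elliptic-curve relation $\Lambda^2=\lambda^4-\gamma\lambda^2+1$ in the $Q3$ case, so it is naturally delegated to computer algebra. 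Verification of $\det{\bf L}^1=\det{\bf L}^2=cA$ can be done in parallel and provides a useful consistency check on the tabulated entries.
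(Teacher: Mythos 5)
Your proposal follows essentially the same route as the paper, which in fact states this proposition without a written proof and relies on the general argument already given for $m\mathcal{H}1$: each $a$-list system is of the special form (\ref{sys-gen}), hence admits the projective reformulation (\ref{lax-rep}); by 3D-compatibility the matrices $M^T(v,t,q;\lambda)$ or $L(u,s,p;\lambda)$ serve as Lax matrices, and the explicit entries and the normalization by $\sqrt{A}$ (so that $\det {\bf L}^1=\det{\bf L}^2=cA$) are then fixed by direct computation exactly as you describe. One minor correction: no elliptic-curve relation enters the $a$-list verifications --- in Table \ref{Table:2} the symbol $\Lambda$ for $a\mathcal{Q}3^{\delta}$ is simply $\lambda-1/\lambda$; the relation $\Lambda^2=\lambda^4-\gamma\lambda^2+1$ is needed only for $m\mathcal{Q}4$, which belongs to the $m$-list.
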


\section{Vertex systems of integrable equations} \label{Section:6}
In Proposition \ref{prop:3.3}, it was proven that by using appropriate invariants,  the bond system $m\mathcal{H}1$ leads to the vertex system of integrable equations $d-H2.$ 

Working similarly as in Proposition \ref{prop:3.3}, the bond systems $m\mathcal{A}2, m\mathcal{H}3^{\delta}, m\mathcal{H}1,$ $a\mathcal{Q}1^{\delta},$ $a\mathcal{A}1^{\delta},$ $a\mathcal{H}2$ and $a\mathcal{H}1,$ lead to two-component vertex systems of integrable equations defined on a $\mathbb{Z}^2$ lattice. This is presented in Theorem \ref{theorem1}. As for the remaining members of the $m$ and the $a-$list, the corresponding vertex systems are at the moment not known.
In the next subsection we present the invariants for all the members of the $m-$list and the $a-$list of integrable difference systems in bond variables.
\subsection{Invariants}
In the following we give the associated invariants for all members of the m-list and the a-list. They can be constructed in an algorithmic manner from the corresponding Lax matrix, by considering the coefficients of powers of the spectral parameter $\lambda$ that appear on the expression $Tr(L(u,s;p\lambda)L^{-1}(v,t;q,\lambda))$, where $Tr(A)$ stands for the trace of a matrix $A$. The invariants of the $m-$list and the $a-$list are presented in Table \ref{table13} and Table \ref{table14} respectively.
\begin{table}[!h]
\captionof{table}{ Invariants for the $m-$list of systems of difference equations. For sake of brevity  we denote $H(u_2,s_2,v_1,t_1)$ as $\widehat H,$   $H(u,s,v,t)$ as $H$ and similarly for  $J$.} \label{table13}
\begin{tabular}{l|l|l}
  \hline
   System & Invariants $H, J$  & Invariance condition \\ \hline
  $m\mathcal{Q}4$           & $
  \begin{array}{l}
  H={\dis \frac{s(qu-pv)+t(qv-pu)-a(pqt-u-stv+pqsuv)}{s(qv-pu)+t(qu-pv)-a(pqs-v-stu+pqtuv)}},\\ [3mm]
  J={\dis \frac{1}{AB}\frac{(s+t) (u+v) \left((p+q) \left(\frac{p q (u v+1)}{u+v}-\frac{s t+1}{s+t}\right)+p Q+P q\right)^2}{(p-q) \left(1-p^2 q^2\right)+q (P+Q) \left(\frac{s
   t+1}{s+t}-\frac{p q (u v+1)}{u+v}\right)-Q (p Q+P q)}},\\ [3mm]
  \mbox{where} \; A:=\sqrt{p^2 s - u - s u (s - p^2 u) + 2 P s u},\\ [3mm]
  \;\;\;\;\;\; B:=\sqrt{q^2 t - v - t v (t - q^2 v) + 2 Q t v},\\ [3mm]
  \;\;\;\;\;\; a:={\dis\frac{pQ-qP}{1-p^2q^2}} \\ [3mm]
  \end{array}
  $  & $  \widehat H={\dis \frac{1}{H}},\;\widehat J=J$ \\ [3mm]  \hline
  $m\mathcal{Q}3^{\delta}$  & $ \begin{array}{l}
  H={\dis \frac{-c(u+vst)+a(tu+sv)-b(su+tv)-dt}{-c(v+ust)+a(su+tv)-b(tu+sv)-ds}}\\ [3mm]
  J={\dis \frac{\Pi(s+t)-pq(u+v)\left((s+t)(1+pq)-(p+q)(1+st)\right)}{\sqrt{\delta (1-p^2)^2s+pu(p-s)(1-ps)}\sqrt{\delta (1-q^2)^2t+qv(q-t)(1-qt)}}},\\ [3mm]
  \mbox{where}\;a:=p-1/p,\;\;b:=q-1/q,\;\;c:=p/q-q/p,\;\;d:=\delta a b c\\ [3mm]
  \;\;\;\;\;\;\Pi:=\delta (p+q)(1-p^2)(1-q^2)\\ [3mm]
  \end{array}$  & $\widehat H={\dis \frac{1}{H}},\;\widehat J=J$ \\ [3mm] \hline
  $m\mathcal{Q}1^{\delta}$  & $ \begin{array}{l}
  H={\dis \frac{\delta pq(p-q)t+p(t-1)(u-sv)-q(s-1)(u-tv)}{\delta pq(p-q)s+p(t-1)(v-su)-q(s-1)(v-tu)}}, \\ [3mm]
   J={\dis\frac{-\delta pq(s+t)+(1-s)(1-t)(u+v)}{\sqrt{\delta p^2s-u(1-s)^2}\sqrt{\delta q^2t-v(1-t)^2}}} \\ [3mm]
   \end{array} $  & $\widehat H={\dis \frac{1}{H}},\;\widehat J=J$ \\ [3mm] \hline
  $m\mathcal{A}2$           & $H={\dis\frac{(1+pq)(u+v)-(p+q)(1+uv)}{(p-q)(1-uv)-(1-pq)(u-v)}},\;\; J={\dis\frac{s}{t}} $  & $\widehat H=-H,\;\widehat J=J $ \\ [3mm] \hline
  $m\mathcal{A}1^{\delta}$  & $\begin{array}{l}
  H={\dis \frac{-\delta pq(p-q)t+p(t+1)(u+sv)-q(s+1)(u+tv)}{-\delta pq(p-q)s+p(t+1)(v+su)-q(s+1)(v+tu)}},\\ [3mm]
  J={\dis\frac{-\delta pq(s+t)+(1+s)(1+t)(u+v)}{\sqrt{\delta p^2s-u(1+s)^2}\sqrt{\delta q^2t-v(1+t)^2}}} \\ [3mm]
  \end{array} $  & $\widehat H={\dis \frac{1}{H}},\;\widehat J=J $ \\ [3mm] \hline
  $m\mathcal{H}3^{\delta}$  & $H={\dis\frac{\delta p+u}{\delta q+v}},\;\; J={\dis\frac{s}{t}} $  & $ \widehat H={\dis \frac{1}{H}},\;\widehat J=J$ \\ [3mm] \hline
  $m\mathcal{H}1$           & $H={\dis u-v+\frac{p-q}{2}},\;\; J={\dis\frac{s}{t}} $  & $\widehat H=-H,\;\widehat J=J $ \\ [3mm]
  \hline
\end{tabular}
\end{table}

\begin{table}[!h]
\captionof{table}{ Invariants for the $a-$list of systems of difference equations. For sake of brevity  we denote $H(u_2,s_2,v_1,t_1)$ as $\widehat H,$   $H(u,s,v,t)$ as $H$ and similarly for  $J$.} \label{table14}
\begin{tabular}{l|l|l}
  \hline
   Bond system & Invariant $H, \; J$ &  Invariance condition \\ \hline
  $a\mathcal{Q}3^{\delta}$       & $
                           \begin{array}{l}
                           H={\dis \frac{2abc\delta+(a-b+c)st+(-a+b+c)uv}{(a+b+c)tu+(a+b-c)sv-2bsu-2atv}},\\ [3mm]
                           J= {\dis\frac{\Pi+pq\left((1+p)(1+q)st+(-1+p+q-pq)uv\right)}{AB}}, \\ [3mm]
                          \mbox{where}\; a:=p-1/p,\;\;b:=q-1/q,\;\;c:=p/q-q/p,\\ [3mm]
                          \;\;\;\;\;\;A:= \sqrt{4\delta(1-p^2)^2+p(1+p)^2s^2-p(1-p)^2u^2},\\ [3mm]
                          \;\;\;\;\;\;B:= \sqrt{4\delta(1-q^2)^2+q(1+q)^2t^2-q(1-q)^2v^2},\\ [3mm]
                          \;\;\;\;\;\;\Pi:=2\delta (p+q)(1-p^2)(1-q^2) \\ [3mm]
                            \end{array}$  & $\widehat H=-H,\;\widehat J=J$ \\ [3mm] \hline
  $a\mathcal{Q}2$                & $ \begin{array}{l}
   H={\dis\frac{p^2-pq+q^2-u-v+\frac{st}{pq}}{(p-q)(s-t)+(u-v)\frac{pt-qs}{pq}}},\\ [3mm]
   J={\dis \frac{pq(p^2-pq+q^2)+st-pq(u+v)}{\sqrt{p^4+s^2-2p^2u}\sqrt{q^4+t^2-2q^2v}}} \\ [3mm]
   \end{array}$  & $\widehat H=-H,\;\widehat J=J$ \\[3mm] \hline
  $a\mathcal{Q}1^{\delta}$           & $ H=u-v,$ \;\; $J={\dis\frac{qs-pt}{\delta pq-st}} $ & $\widehat H=-H,\;\widehat J=J$ \\ [3mm] \hline
  $a\mathcal{A}1^{\delta}$       & $ H={\dis \frac{pv-qu}{\delta^2pq-uv}},$ \;\; $ J=s-t$ & $\widehat H=-H,\;\widehat J=J$ \\[3mm] \hline
  $a\mathcal{H}3^{\delta}$       & $ \begin{array}{l}
  H={\dis \frac{2\delta (p+q)+uv-st}{(p+q)(tu+sv)-2qsu-2ptv}},\\ [3mm]
  J={\dis\frac{2\delta (p+q)+uv-st}{\sqrt{4\delta p+u^2-s^2}\sqrt{4\delta q+v^2-t^2}}} \\ [3mm]
  \end{array}$  & $\widehat H=-{\dis \frac{1}{H}},\;\widehat J=J$ \\ [3mm] \hline
  $a\mathcal{H}2$                & $H={\dis\frac{u+p}{v+q}}, $ \;\; $J=s-t $ & $\widehat H={\dis \frac{1}{H}},\;\widehat J=J$ \\  [3mm] \hline
  $a\mathcal{H}1$                & $H=u-v, $\;\;  $J=s-t $ & $\widehat H=-H,\;\widehat J=J$ \\
  \hline
\end{tabular}
\end{table}

\subsection{Vertex systems}
\begin{theorem} \label{theorem1}
The corresponding  systems of vertex equations to the bond systems $m\mathcal{A}2, m\mathcal{H}3^{\delta}, m\mathcal{H}1,$ $a\mathcal{Q}1^{\delta},$ $a\mathcal{A}1^{\delta},$ $a\mathcal{H}2$ and $a\mathcal{H}1,$  are point equivalent to one of the following  systems
\begin{align*}
&\begin{aligned}
\tanh (w_{12}-w)  \tanh (z_{1}-z_{2}) +\frac{p-q}{p+q}=0,
&&
\tanh (z_{12}-z)  \tanh (w_{1}-w_{2}) +\frac{p-q}{p+q}=0,
\end{aligned} & (double-H3^{0})\\
&\begin{aligned}
(x_{12}-x)  \tanh (z_{1}-z_{2}) =p-q,
&&
(x_{1}-x_{2}) \tanh (z_{12}-z)   =p-q,
\end{aligned} & (double-H2)\\
&\begin{aligned}
(x_{12}-x)(y_1-y_2)=p-q, && (y_{12}-y)(x_1-x_2)=p-q,
\end{aligned} & (double-H1)
\end{align*}
\end{theorem}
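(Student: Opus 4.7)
The proof extends the method of Proposition~\ref{prop:3.3}, which treats $m\mathcal{H}1$, to each of the seven bond systems. For every such system the pair of invariants $(H,J)$ listed in Table~\ref{table13} or Table~\ref{table14}, together with their transformation laws under $\phi$, supplies two vertex potentials; substituting them into the bond equations and performing a point transformation yields one of $double$-$H3^{0}$, $double$-$H2$, or $double$-$H1$.

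The first step is to interpret each invariance condition as a discrete closedness condition on the $\mathbb{Z}^2$ graph. An invariance of multiplicative type, such as $J=s/t$ with $\widehat J=J$ (as in all three $m$-list systems) or $H=(\delta p+u)/(\delta q+v)$ with $\widehat H=1/H$ (as in $m\mathcal{H}3^{\delta}$ and, structurally, in $a\mathcal{H}2$), expresses the closedness of a multiplicative discrete $1$-form on the edges of each quadrangle, and therefore yields a vertex potential with multiplicative parametrization such as $s=x_1/x$, $t=x_2/x$. An invariance of additive type, $\widehat H=-H$ or $\widehat J=\pm J$ for additive invariants like $s-t$ or $u-v$, gives a closedness condition for an additive discrete $1$-form and yields an additive parametrization such as $u=z_1+z+c_1$, $v=z_2+z+c_2$ or $s=w_1-w$, $t=w_2-w$. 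A case-by-case inspection guided by Tables~\ref{table13}--\ref{table14} shows that in each of the seven systems both invariants produce independent vertex potentials, and that the auxiliary relations of the form $tu=sv$ (or their additive analogues) are either automatically satisfied on potentials of the chosen form or may be imposed without loss of generality by the residual gauge freedom.

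Substituting these parametrizations into the bond equations of Propositions~\ref{prop3.1}--\ref{prop3.2} and simplifying produces a pair of coupled scalar vertex equations in the two potentials. A suitable point transformation then brings the result into canonical form: a logarithmic substitution of the form $x=e^{w}$ or $x=\tanh w$ converts the multiplicative expressions arising from $m\mathcal{H}3^{\delta}$ and $m\mathcal{A}2$ into the hyperbolic tangents of $double$-$H3^{0}$; a sign-staggered relabelling $y\mapsto (-1)^{m+n}y$ converts a symmetric sum such as $y_{12}+y_1+y_2+y$ arising in $a\mathcal{H}1$ into the difference $y_{12}-y$ required by $double$-$H1$; pure additive shifts suffice for the $double$-$H2$ cases. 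The case $m\mathcal{H}1 \to double$-$H2$ treated in Proposition~\ref{prop:3.3} is the prototype that all seven cases imitate.

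The main obstacle is the third step: guessing the simultaneous point transformations of the two potentials that bring both component equations of the resulting vertex system into canonical form at once. The gauge freedom intrinsic to each potential (additive constants, sign staggerings, logarithmic substitutions, overall multiplicative rescalings) must be coordinated between the two potentials, since an inappropriate gauge in one variable destroys the form of the equation governing the other. Once the correct ansatz is fixed, the remaining verification reduces to routine algebraic simplification entirely analogous to that performed for $m\mathcal{H}1$ in Proposition~\ref{prop:3.3}.
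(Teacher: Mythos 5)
Your overall strategy coincides with the paper's: each invariance condition from Tables \ref{table13}--\ref{table14} is integrated to a vertex potential, the bond system is rewritten in those potentials, and a point transformation brings the result to canonical form, exactly as in the prototype Proposition \ref{prop:3.3}. There are, however, two concrete problems in the execution. First, your additive/multiplicative dichotomy of invariance conditions does not cover precisely the cases that generate the hyperbolic tangents, namely $m\mathcal{A}2$, $a\mathcal{Q}1^{\delta}$ and $a\mathcal{A}1^{\delta}$. There the relevant invariant is a M\"obius-type rational function of the edge variables --- e.g. $J=\frac{qs-pt}{\delta pq-st}$ with $\widehat J=J$ for $a\mathcal{Q}1^{\delta}$, or $H=\frac{(1+pq)(u+v)-(p+q)(1+uv)}{(p-q)(1-uv)-(1-pq)(u-v)}$ with $\widehat H=-H$ for $m\mathcal{A}2$ --- and integrating it requires first recognizing it as a $\tanh$-addition formula, which is what produces the parametrizations $s=p\tanh(z_1-z)$, $t=q\tanh(z_2-z)$ and $u=\frac{p^2-1+(p^2+1)\tanh(w_1+w)}{2p\tanh(w_1+w)}$ used in the paper. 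Equivalently, a M\"obius change of the \emph{edge} variables must be performed before any closedness argument applies; your plan only contemplates point transformations of the vertex potentials after substitution, and locates the difficulty in coordinating gauges between the two potentials, whereas in the paper the two invariance conditions are integrated entirely independently of one another and the real work is in integrating each nonlinear condition by itself.

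Second, your assertion that the auxiliary relations $tu=sv$ (or $t+u=s+v$) are ``either automatically satisfied on potentials of the chosen form or may be imposed without loss of generality'' is incorrect and, if acted upon, would derail the argument. These relations are not consequences of the bond systems, and imposing them is exactly the reduction of Section 4.3 that collapses each bond system to the corresponding \emph{scalar} quad-equation ($A2$, $Q1^{\delta}$, $H1$, and so on). The two-component vertex systems $double$-$H3^{0}$, $double$-$H2$ and $double$-$H1$ arise precisely because no such constraint is imposed: the derivation uses the two invariance conditions and nothing else, and the two potentials remain genuinely independent fields.
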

\begin{proof}
First, we prove that the vertex system of equations associated with $m\mathcal{A}2$ leads to $double-H3^{0}.$ From Table \ref{table13}, we have that $a\mathcal{Q}1^{\delta}$ for $n=2$ satisfies the following invariant conditions:
$$
\frac{s_2}{t_1}=\frac{s}{t},\quad \frac{(1+pq)(u_2+v_1)-(p+q)(1+u_2v_1)}{(p-q)(1-u_2v_1)-(1-pq)(u_2-v_1)}+\frac{(1+pq)(u+v)-(p+q)(1+uv)}{(p-q)(1-uv)-(1-pq)(u-v)}=0
$$
The first condition from above guarantees the existence of a potential function $x$ such that
$$
s=\frac{x_1}{x},\quad t=\frac{x_2}{x},
$$
whereas the second condition,  guarantees the existence of a potential function $w$ such that
$$
u=\frac{p^2-1+(p^2+1)\tanh(w_1+w)}{2p\tanh(w_1+w)},\quad v=\frac{pq-1+(pq+1)\tanh(w_2+w)}{p-q+(p+q)\tanh(w_2+w)}.
$$
The bond system $m\mathcal{A}2$ in terms of the potential functions $x$ and $w$ reads
\begin{equation} \label{dh30}
\frac{x_1-x_2}{x_1+x_2}\tanh(w_{12}-w)+\frac{p-q}{p+q}=0,\quad \frac{x_{12}-x}{x_{12}+x}\tanh(w_1-w_2)+\frac{p-q}{p+q}=0,
\end{equation}
and after the substitution $x=e^{2z},$ becomes exactly  $double-H3^{0}$.
Second, we prove that the vertex system of equations associated with $a\mathcal{Q}1^{\delta}$ is exactly $double-H2.$   From Table \ref{table14}, we have that $a\mathcal{Q}1^{\delta}$ for $n=2$ satisfies the following invariant conditions:
$$
u_2-v_1=v-u,\quad \frac{qs_2-pt_1}{\delta^2pq-s_2t_1}=\frac{qs-pt}{\delta^2pq-st}.
$$
The first condition from above guarantees the existence of a potential function $x$ such that
$$
u=x_1+x,\quad v=x_2+x,
$$
whereas the second condition, for $\delta=1$, guarantees the existence of a potential function $z$ such that
$$
s=p\tanh(z_1-z),\quad t=q\tanh(z_2-z).
$$
The difference system $a\mathcal{Q}1^{\delta}$ for $n=2$ in terms of the potential functions $x$ and $z$ is exactly $double-H2.$

Third, the proof that the bond system $a\mathcal{H}1$ leads exactly to $double-H1,$ is straight forward.

The remaining bond systems $ m\mathcal{H}3^{\delta}, m\mathcal{H}1,$  $a\mathcal{A}1^{\delta},$ and $a\mathcal{H}2,$ in similar manner lead to  vertex systems. Namely, $ m\mathcal{H}3^{\delta}$ leads to $double-H3^0,$ $m\mathcal{H}1$ leads to $double-H1,$ $a\mathcal{A}1^{\delta}$ leads to $double-H2$ and finally $a\mathcal{H}2,$ leads to $double-H2.$

\end{proof}

\begin{remark}
The two-component lattice equation $double-H1$  alternatively can be derived from the matrix version of $H1$ \cite{Field:2005}, when $2\times 2$ off-diagonal matrices are used. A Lax representation for $double-H1,$  was  considered in \cite{Bridgman:2013}.
\end{remark}
\begin{remark}
The two-component lattice equation $double-H3^{0}$ equivalently reads
$$
p\cosh(w_{12}-w+z_1-z_2)-q\cosh(z_1-z_2+w-w_{12})=0,\quad p\cosh(z-z_{12}+w_2-w_1)-q\cosh(w_1-w_2+z-z_{12})=0,
$$
that  serves as a two-component generalization of the Hirota's discrete sine-Gordon equation \cite{Hir-sG}.
\end{remark}

\noindent {\bf Rational form of the double $H$-type equations.}
The system of equations $double-H3^0$ can be written in rational form if we set $x=e^{2w}$, $y=e^{2z},$
$$
p (x_1 y_{12} + x_2 y) - q (y x_1 + y_{12} x_2) = 0,\quad
 p (y_2 x + y_1 x_{12}) - q (y_1 x + y_2 x_{12}) = 0, \quad (double-H3^0).
$$
If we set $y=e^{2z},$  $double-H2$ takes the rational form
$$
(y_1-y_2)(x-x_{12})+(p-q)(y_1+y_2) = 0,\quad
(y-y_{12})(x_1-x_2)+(p-q)(y+y_{12})= 0, \quad (double-H2).
$$
Finally, $double-H1$ is already in rational form, for  consistent presentation we present it again 
$$
(x_{12}-x)(y_1-y_2)=p-q, \quad (y_{12}-y)(x_1-x_2)=p-q. \quad
  (double-H1).
$$
\noindent {\bf Non-potential versions of the double $H$-type equations.}
 For a comprehensive study of non-potential versions of integrable lattice equations on the $\mathbb{Z}^2$ and the $\mathbb{Z}^3$ lattice, see \cite{KaNie:2018}. Here we present the non-potential version of the double $H$-type equations which were presented earlier.
\begin{itemize}
\item  By performing the difference substitution  $h=e^{2 (w_1-w_2)}$, $g=e^{2 (z_1-z_2)},$ the $double-H3^0$ reads
\begin{equation}\label{nonpot1}
\begin{array}{ll}
\displaystyle
\frac{h_{12}}{h}=\frac{(qg_1-p)}{(pg_1-q)}  \frac{(pg_2-q)}{(qg_2-p)},&  \quad
\displaystyle
\frac{g_{12}}{g}=\frac{(qh_1-p)}{(ph_1-q)}  \frac{(ph_2-q)}{(qh_2-p)}.
\end{array}
\end{equation}
The system of equations (\ref{nonpot1}), is regarded as the non-potential version of $double-H3^0$. Note that under the reduction $h=g,$ (\ref{nonpot1}) reduces to the lattice sine-Gordon equation \cite{Volkov-1992,Bobenko-1993}.
\item  Under the  difference substitution  $h=x_1-x_2$, $g=e^{2 (z_1-z_2)},$ $double-H2$ reads
\begin{equation} \label{nonpot2}
\begin{array}{ll}
\displaystyle
h_{12}-h=\frac{2(p-q)}{g_1-1}-\frac{2(p-q)}{g_2-1},& \quad
\displaystyle
\frac{g_{12}}{g}=\frac{(h_1+p-q)}{(h_1-p+q)}  \frac{(h_2-p+q)}{(h_2+p-q)}.
\end{array}
\end{equation}
The system of equations (\ref{nonpot2}), is regarded as the non-potential version of $double-H2$.
\item The non-potential version of   $double-H1,$ can be obtained through the difference substitution $h=x_1-x_2$, $g=y_1-y_2.$ It reads
\begin{equation} \label{nonpot3}
\begin{array}{ll}
\displaystyle
h_{12}-h=\frac{p-q}{g_1}-\frac{p-q}{g_2}, & \qquad
\displaystyle
g_{12}-g=\frac{p-q}{h_1}-\frac{p-q}{h_2}.
\end{array}
\end{equation}
The system of equations (\ref{nonpot3}), under the reduction $h=g,$  reduces to the lattice Hirota's KdV equation \cite{hirota-0}.
\end{itemize}
As a final remark,  equation (\ref{nonpot1}) constitute of two coupled copies of the lattice sine-Gordon equation. Equation (\ref{nonpot3}) constitute of two coupled copies of the lattice Hirota's KdV equation. Finally, equation (\ref{nonpot2}) constitute of a copy of the lattice sine-Gordon equation coupled with a copy of the lattice Hirota's KdV equation and as it stands it seems that it does not possesses a scalar analogue.\\
\noindent {\bf 5-point schemes associated with  the double $H$-type equations.}
The system of vertex equations $double-H1$ serves as parameterless (auto)-B\"acklund transformation 
for a 5-point equation. Namely, both $x$ and $y$ obey the equation
\begin{equation}
\label{5p1}
\begin{array}{ll}
\displaystyle
\frac{p-q_{-2}}{y_{-21}-y}+\frac{p_{-1}-q}{y_{-12}-y}=\frac{p-q}{y_{12}-y}+\frac{p_{-1}-q_{-2}}{y_{-1-2}-y}.
\end{array}
\end{equation}

Similiary, the system of vertex equations $double-H3$ serves as  (auto)-B\"acklund transformation for a 5-point equation i.e.
 $x=e^{2w}$ and $y=e^{2z}$ obeys the equation
\begin{equation}
\label{5p3}
\begin{array}{ll}
\displaystyle
\frac{(q y_{12}-p y)}{(p y_{12}-q y)} \frac{(p_{-1} y-q_{-1} y_{-1-2})}{(q_{-2} y-p_{-1} y_{-1-2})}
=
\frac{(q_{-2} y_{1-2}-p y)}{(p y_{1-2}-q_{-2} y)}
\frac{(p_{-1} y-q y_{-12})}{(q y-p_{-1} y_{-12})}  .
\end{array}
\end{equation}

Finally, the system of vertex equations $double-H2$ serves as (non auto)-B\"acklund transformation between 5-point equation (\ref{5p1})
where $y=e^{2z}$ and
\begin{equation}
\begin{array}{ll}
\displaystyle
\frac{(x_{12}-x+p-q)}{(x_{12}-x-p+q)} \frac{(x_{-1-2}-x+p_{-1}-q_{-2})}{(x_{-1-2}-x-p_{-1}+q_{-2})}=
\frac{(x_{1-2}-x+p-q_{-2})}{(x_{1-2}-x-p+q_{-2})}\frac{(x_{-12}-x+p_{-1}-q)}{(x_{-12}-x-p_{-1}+q)}.
\end{array}
\end{equation}

\section{Conclusions} \label{Section:7}
In this work we have presented two lists, the $m-$list and the $a-$list of  multi-component systems of integrable equations in bond variables. These multi-component systems are consistent with appropriate constraints imposed on the $\mathbb{Z}^2$ graph. Using these constraints we obtained known integrable quad-equations equations, as well as the discrete Krichever-Novikov equation, as  reductions of the  multi-component systems in bond variables.

The integrability of the members of the $m-$list and the $a-$list follows from the multi-dimensional compatibility of these members. Moreover, the multi-dimensional compatibility guarantees the existence of Lax pairs, as well as the existence of  Yang-Baxter maps associated with the members of both lists. 

Finally, from some members of the $m-$list and the $a-$list, we have obtained the corresponding vertex systems of equations, together with their rational form and their non-potential counterparts, as well as some lattice equations defined on $5-$point stencils.

Regarding all the members of the $m-$list and $a-$list, some open questions  that can be addressed in a future study include:
\begin{itemize}
\item The Liouville integrability as it was indicated in Proposition \ref{prop3.2}, for the motivating example.
\item The corresponding non-potential forms of the systems.
\item Entwining maps \cite{Kouloukas:2011,Kassotakis:2019,Rizos:2019} associated with the Yang-Baxter maps given in Appendix \ref{appA} and in Appendix \ref{appB}.
\item The derivation of the associated two-component vertex systems.
\end{itemize}

\appendix
\section{Yang-Baxter maps associated with the $m-$list of integrable difference systems in bond variables} \label{appA}
\begin{prop}\label{prop3.4}
The following families of maps $ R:(u,s,p;v,t,q)\mapsto (U,S,p;V,T,q),$    where
\begin{align*}
&\begin{aligned}
&U=\frac{at(pqs-v)+v(p-qst)}{a(s-pqv)+q-pst},&&
&S=\frac{av(1-pqtu)+t(qu-pv)}{a(pq-tu)+pu-qv},&\\
&V=\frac{a(pq-tu)+u(p-qst)}{as(1-pqtu)+q-pst},&&
&T=\frac{a(pqs-v)+s(pu-qv)}{au(s-pqv)+qu-pv},&\\
&\mbox{where} \;\; a:=\frac{pQ-qP}{1-p^2q^2},\;P^2:=p^4-\gamma p^2+1,\;Q^2:=q^4-\gamma q^2+1,&&
\end{aligned} & (m^c\mathcal{Q}4)\\
&\begin{aligned}
&U=\frac{(abc\delta+bv)st+v(ct-a)}{-b+s(at-c)},&&
&S=\frac{-btu+v(at-c)}{u(ct-a)+bv+abc\delta},&\\
&V=\frac{tu(bs+c)-au+abc\delta}{-b+s(at-c)},&&
&T=\frac{s(abc\delta-au)+v(bs+c)}{av-u(cs+b)},&\\
&\mbox{where} \;\; a:=p-1/p,\;\; b=q-1/q,\; c:=p/q-q/p,&&
\end{aligned} & (m^c\mathcal{Q}3^{\delta})\\
&\begin{aligned}
&U=\frac{-\delta (p-q)pqst+tv(p-q+qs)-pv}{(q-p)s-q+pst},&&
&S=\frac{(p-q)v+t(qu-pv)}{(p-q)(\delta pq-tu)+pu-qv},&\\
&V=\frac{(p-q)(-\delta pq+tu)+u(qst-p)}{(q-p)s-q+pst},&&
&T=\frac{(p-q)(-\delta pqs+v)-s(pu-qv)}{(q-p)su-qu+pv},&
\end{aligned} & (m^c\mathcal{Q}1^{\delta})\\
&\begin{aligned}
&U=\frac{bv+st(c-a v)}{a-bst-cv},&&
&S=t\frac{au-v(b+cu)}{c+bu-av},&\\
&V=\frac{c+u(b-ast)}{a-st(b+cu)},&&
&T=s\frac{c+bu-av}{au-v(b+cu)},&\\
&\mbox{where} \;\; a:=p-1/p,\;\; b=q-1/q,\; c:=p/q-q/p,&&
\end{aligned} & (m^c\mathcal{A}2)\\
&\begin{aligned}
&U=\frac{(p-q)(tv-\delta pqst)+v(p-qst)}{(q-p)s+q-pst},&&
&S=\frac{(p-q)v-t(qu-pv)}{(p-q)(\delta pq-tu)-pu+qv},&\\
&V=\frac{(p-q)(tu-\delta pq)+u(p-qst)}{(q-p)s+q-pst},&&
&T=\frac{(p-q)(v-\delta pqs)+s(pu-qv)}{(q-p)su+qu-pv},&
\end{aligned} & (m^c\mathcal{A}1^{\delta})\\
&\begin{aligned}
&U=\frac{\delta (p^2-q^2)st+v(p-qst)}{q-pst},&&
&S=t\frac{qu-pv}{\delta (p^2-q^2)+pu-qv},&\\
&V=\frac{\delta (p^2-q^2)+u(p-qst)}{q-pst},&&
&T=s\frac{\delta (p^2-q^2)+pu-qv}{qu-pv},&
\end{aligned} & (m^c\mathcal{H}3^{\delta})\\
&\begin{aligned}
&U=v+(p-q)\frac{st}{1-st},&&
&S=t\left(1-\frac{(p-q)}{p-q+u-v}\right),&\\
&V=u+(p-q)\frac{1}{1-st},&&
&T=s\left(1+\frac{(p-q)}{u-v}\right),&
\end{aligned} & (m^c\mathcal{H}1)
\end{align*}
\begin{itemize}
\item are quadrirational non-involutive Yang-Baxter  maps
\item each  map is associated with  the corresponding difference system of Proposition  \ref{prop3.1} with $n=2.$
\end{itemize}
\end{prop}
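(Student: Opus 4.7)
The plan is to identify each map $R$ in Proposition \ref{prop3.4} as the companion map of the corresponding bond system from Proposition \ref{prop3.1} with $n=2$, and then to read off the three required properties from facts already proved in the paper together with short direct checks.

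First I would make the identification precise. Setting $n=2$ and $(u,s,v,t) := (X^1,Y^1,X^2,Y^2)$ in a bond system of Proposition \ref{prop3.1} yields the forward map $\phi : (u,s,v,t) \mapsto (u_2,s_2,v_1,t_1)$. Its companion $\phi^c$ is obtained by treating $(u,s)$ and $(v_1,t_1)$ as inputs and $(u_2,s_2), (v,t)$ as outputs: one solves the two equations defining $v_1, t_1$ algebraically for $(v,t)$ and substitutes back into the expressions for $u_2, s_2$. Renaming the input pair $(v_1,t_1)$ as $(v,t)$ and the output pair $(v,t)$ as $(V,T)$, together with $(U,S) := (u_2,s_2)$, the resulting rational map is of exactly the form $R : (u,s,p;v,t,q) \mapsto (U,S,p;V,T,q)$ with preserved spectral parameters. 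I would carry out this elimination explicitly for each of the seven members of the $m$-list and verify that the result coincides with the formulas tabulated in Proposition \ref{prop3.4}. The check is short for $m\mathcal{H}1$, $m\mathcal{H}3^{\delta}$, $m\mathcal{A}1^{\delta}$, $m\mathcal{A}2$ and $m\mathcal{Q}1^{\delta}$, and somewhat heavier but still mechanical for $m\mathcal{Q}3^{\delta}$ and $m\mathcal{Q}4$.

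Once this identification is in hand, the Yang-Baxter property follows directly from the multi-dimensional compatibility established in Proposition \ref{prop3.1}(3). Extending each bond system to $n=3$ and reading the equalities $X^i_{jk} = X^i_{kj}$, $Y^i_{jk} = Y^i_{kj}$ for all pairs $\{i,j\} \subset \{1,2,3\}$ as statements about maps on $X \times X \times X$, they translate edge by edge into the braid relation $R_{23} \circ R_{13} \circ R_{12} = R_{12} \circ R_{13} \circ R_{23}$ for the companion map, by the standard construction of \cite{ABSf,Veselov:2003b}. No computation beyond what already enters the proof of Proposition \ref{prop3.1} is required here. Quadrirationality is then addressed by explicitly exhibiting the three companion partners of each $R$, namely the maps obtained by swapping inputs with outputs in one of the two arguments. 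The defining equations of $R$ are rational and depend on each of the variables $u, s$ (respectively $v, t$) only through a single M\"obius block in numerator and denominator, so solving for the swapped quantities is again a birational operation. Performing this inversion for $m^c\mathcal{Q}4$ (the generic member of the $m$-list) and observing that the remaining maps arise by degenerations that preserve this M\"obius structure suffices to cover all seven cases. Non-involutivity is verified at the end by a direct evaluation of $R \circ R$ at one generic point for a single representative: an explicit substitution in $m^c\mathcal{H}1$ shows $R^2 \neq \mathrm{id}$, and the same check can be carried out, or inferred by continuous deformation, for the other six maps.

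The main obstacle is purely computational: the elimination producing the companion of $m\mathcal{Q}4$ from its bond system involves the elliptic parameter $a := (pQ - qP)/(1-p^2 q^2)$ and several bilinear combinations of all four variables, and therefore requires a symbolic algebra system to handle cleanly. Conceptually, however, no new ideas beyond the companion-map/3D-consistency equivalence and the content of Proposition \ref{prop3.1} enter the argument.
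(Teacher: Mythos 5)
Your proposal is correct and follows essentially the route the paper itself relies on: each map in the list is the companion map of the corresponding bond system (solving the $v_1,t_1$ equations for $v,t$ and renaming), and the Yang--Baxter property is inherited from the multi-dimensional compatibility of Proposition \ref{prop3.1} via the standard companion-map construction of \cite{ABSf,Veselov:2003b}, exactly as asserted in the remark following Proposition \ref{prop3.1} and in the proof of statement (5) of the $m\mathcal{H}1$ proposition. Your additional observations on quadrirationality (via the M\"obius structure of the form (\ref{sys-gen})) and non-involutivity (direct check) fill in details the paper leaves implicit, and the paper's Appendix C offers only an alternative verification through the $n$-factorization property of the Lax matrices.
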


\newpage

\section{Yang-Baxter maps associated with the $a-$list of integrable difference systems in bond variables} \label{appB}
\begin{prop}\label{prop3.3}
The following families of maps $ R:(u,s,p;v,t,q)\mapsto (U,S,p;V,T,q),$    where\begin{align*}
&\begin{aligned}
&U=s-2\frac{-abc\delta+bs^2+(a+c)st-v(bs+at)}{2bs+(a+b+c)t+(a-b-c)v},&&
S=u-2\frac{-abc\delta+bu^2-(a+c)tu+v(at-bu)}{2bu+(-a+b-c)t+(-a-b+c)v},\\
&V=t-2\frac{-abc\delta+at^2-(b+c)tu+s(at-bu)}{2at+(a-b-c)u+(a+b-c)s},&&
T=v-2\frac{-abc\delta+av^2-(b+c)uv+s(bu-av)}{2av-(a+b+c)u+(-a+b-c)s},\\
&where \;\; a:=p-1/p,\;\; b=q-1/q,\; c:=p/q-q/p,
\end{aligned} & (a^c\mathcal{Q}3^{\delta})\\
&\begin{aligned}
&U=v+(p-q)\frac{pq(p^2-pq+q^2+s+t-2v)-st}{pq(p-q)+qs+pt},&&
S=t-(p-q)\frac{pq(p^2-pq+q^2-u-v)+t(t+u-v)}{(pq+t)(p-q)-q(u-v)},\\
&V=u-(p-q)\frac{pq(-p^2+pq-q^2+s+t+2u)+st}{pq(p-q)+qs+pt},&&
T=s+(p-q)\frac{pq(p^2-pq+q^2-u-v)+s(s+u-v)}{(pq-s)(p-q)-p(u-v)},
\end{aligned} & (a^c\mathcal{Q}2)\\
&\begin{aligned}
&U=v-(p-q)\frac{\delta pq+st}{qs+pt},&&
S=t-(p-q)\frac{-\delta pq+t(t+u-v)}{(p-q)t-q(u-v)},\\
&V=u-(p-q)\frac{\delta pq+st}{qs+pt},&&
T=s-(p-q)\frac{-\delta pq+s(s+u-v)}{(p-q)s+p(u-v)},
\end{aligned} & (a^c\mathcal{Q}1^{\delta})\\
&\begin{aligned}
&U=p\frac{\delta q(p-q)+v(s+t)}{(p-q)v+q(s+t)},&&
S=t+(p-q)\frac{\delta pq-uv}{qu-pv},\\
&V=q\frac{\delta p(p-q)+u(s+t)}{(p-q)u+p(s+t)},&&
T=s-(p-q)\frac{\delta pq-uv}{qu-pv},
\end{aligned} & (a^c\mathcal{A}1^{\delta})\\
&\begin{aligned}
&U=\frac{-2\delta (p^2-q^2)-(p-q)st+(p+q)sv+2ptv}{2qs+(p+q)t+(p-q)v},&&
S=\frac{2\delta (p^2-q^2)-(p+q)tu+(p-q)uv+2ptv}{-2qu+(p-q)t+(p+q)v},\\
&V=\frac{-2\delta (p^2-q^2)-(p-q)st+(p+q)tu+2qsu}{2pt+(p+q)s+(p-q)u},&&
T=\frac{2\delta (p^2-q^2)-(p+q)sv+(p-q)uv+2qsu}{-2pv+(p+q)u+(p-q)s},
\end{aligned} & (a^c\mathcal{H}3^{\delta})\\
&\begin{aligned}
&U=v-p+q-2(p-q)\frac{q+v}{p-q-s-t},&&
S=t+(p-q)\frac{p+q+u+v}{p-q+u-v},\\
&V=u+p-q-2(p-q)\frac{p+u}{p-q-s-t},&&
T=s-(p-q)\frac{p+q+u+v}{p-q+u-v},
\end{aligned} & (a^c\mathcal{H}2)\\
&\begin{aligned}
&U=v-\frac{p-q}{s+t},&&
S=t-\frac{p-q}{u-v},\\
&V=u-\frac{p-q}{s+t},&&
T=s+\frac{p-q}{u-v},
\end{aligned} & (a^c\mathcal{H}1)
\end{align*}
\begin{itemize}
\item are quadrirational non-involutive Yang-Baxter  maps
\item each  map is associated with  the corresponding difference system of Proposition  \ref{prop3.2} with $n=2.$
\end{itemize}
\end{prop}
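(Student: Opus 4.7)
The plan is to follow the template already set by Proposition 4.1/Proposition A.1 for the $m$-list and adapt it to the $a$-list. The first step is to identify the map $R$ in each case as the \emph{companion map} of the bond system in Proposition \ref{prop3.2}, namely $\phi^c:(u,s,v_1,t_1)\mapsto(u_2,s_2,v,t)$ rearranged so that the two pairs carrying the parameter $p$ enter/leave on one side and the pairs carrying $q$ enter/leave on the other. I would start from the equations of each bond system (as listed in Proposition \ref{prop3.2}) and algebraically solve the relations for $u_2,s_2,v,t$ in terms of $u,s,v_1,t_1$, then relabel $(v_1,t_1)\to(V,T)$ and $(u_2,s_2)\to(U,S)$. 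A direct check (modulo tedious rational manipulations) then matches the resulting formulas with those displayed in the proposition.

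Once the identification with the companion map is established, the Yang-Baxter property follows immediately from multi-dimensional consistency of the underlying bond system. Specifically, as recalled in the proof of Proposition 4.1 item (5) (see also \cite{ABSf}), if a system of the form (\ref{sys-gen}) is 3D-consistent, then the associated companion map on a pair of edges satisfies the Yang-Baxter equation $R_{12}\circ R_{13}\circ R_{23}=R_{23}\circ R_{13}\circ R_{12}$. Since multi-dimensional compatibility of each member of the $a$-list has already been asserted in Proposition \ref{prop3.2} part (3), all that remains to invoke is this general principle; no new consistency check is required.

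The next step is quadrirationality. Following \cite{ABSf,et-2003}, I would verify that, in addition to $R$, the three companion maps obtained by interchanging input/output roles of the pairs $(u,s)\leftrightarrow(U,S)$ and/or $(v,t)\leftrightarrow(V,T)$ are also birational, i.e.\ that the bond relations define a $2^2$-rational correspondence on $(\mathbb{CP}^1)^2\times(\mathbb{CP}^1)^2$. Concretely, for each member this amounts to solving, say, the $a\mathcal{Q}2,\ldots,a\mathcal{H}1$ equations rationally for any of the four pairs of ``unknowns'' among $\{(u,s),(v,t),(U,S),(V,T)\}$ in terms of the other two. The top members $a\mathcal{Q}3^{\delta}$ and $a\mathcal{Q}2$ will be the bottleneck, as inversion there requires solving systems whose numerators and denominators are polynomials of higher degree in the bond variables; however the symmetry of the formulas in $(p,s,t,u)\leftrightarrow(q,t,s,v)$ suggests these inversions close up rationally because of the existence of the two-argument invariants $H,J$ listed in Table \ref{table14}, which restrict the generically quartic intersections to rational ones.

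Finally, non-involutivity is a one-line check: it suffices to evaluate $R^2$ at a single generic point (say on the locus $s=t=u=v$ with parameters $p\ne q$) and observe $R^2\ne\mathrm{id}$; for the bottom member $a^c\mathcal{H}1$ one reads off $R^2(u,s,v,t)=(u-2(p-q)/(s+t),\ldots)\ne(u,s,v,t)$, and the same argument propagates upward. The main obstacle, therefore, is not the Yang-Baxter axiom (which is given) nor non-involutivity (which is trivial), but the explicit algebraic verification of the birational inversions needed to certify quadrirationality of the top members $a^c\mathcal{Q}3^{\delta}$ and $a^c\mathcal{Q}2$; I expect this to be handled by a computer algebra computation that factors the resulting resultants, exactly as in the analogous $m^c\mathcal{Q}4$ case of Appendix \ref{appA}.
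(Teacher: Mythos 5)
Your proposal follows essentially the same route as the paper, which states this proposition without a written proof and relies on exactly the reasoning you reconstruct: each map is the companion map of the corresponding bond system of Proposition \ref{prop3.2}, the Yang--Baxter property is inherited from the multi-dimensional compatibility asserted there via the general principle of \cite{ABSf} (as invoked in item (5) of the motivating proposition in Section \ref{Section:3} and, alternatively, via the $n$-factorization property of the Lax matrices in Appendix \ref{appC}), and quadrirationality and non-involutivity are left to direct rational computation. One small caution: your suggested test locus $s=t=u=v$ for non-involutivity lies on the singular divisors of several members (e.g. $a^c\mathcal{H}1$ has $u-v$ in a denominator), so the generic point must be chosen away from these.
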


\section{Lax pairs of Yang-Baxter maps} \label{appC}

Here it is convenient to use the following notation. We denote the set of variables  $(u,s,p;v,t,q)$ and $(U,S,p;V,T,q)$, where a Yang-Baxter map $R$ acts, respectively as $(x_i,y_i,\alpha_i;x_j,y_j,\alpha_j)$ and $(x'_i,y_i',\alpha_i;x'_j,y'_j,\alpha_j),$ for some fixed $i\neq j\in \mathbb{N}.$ Note  that different subscripts denote different variables and not shifts as it was considered until now in the manuscript.

Now we are ready to recall the following definition.

\begin{dfn}
A map $R: (u,s,p;v,t,q)\mapsto(U,S,p;V,T,q)$ is a Yang-Baxter map if the following relation holds
$$
R_{12}R_{13}R_{23}=R_{23}R_{13}R_{12},
$$
regarded as a composition of maps on a auxiliary space with an additional third dimension indicated by a subscript $k$, i.e.
$$
R_{23}( x_i , y_i , a_i ;    x_j , y_j   ,  a_j  ;   x_k  , y_k ,  a_k ) =
               ( x_i , y_i , a_i ; x'_j , y'_j  ,  a_j  ;   x'_k , y'_k , a_k )  \,.
$$
etc.
\end{dfn}
The connection of Yang-Baxter maps with integrable quad-equations was originated in \cite{Tasos,Papageorgiou_2007} and complemented in \cite{KaNie,KaNie1,KaNie3,KaNie:2018,Kouloukas:2012} where also the connection with higher degree  quad-relations was established. Moreover, the interplay between Yang-Baxter maps and discrete integrable systems led to fruitful  results \cite{atk-2013,BAZHANOV2018509,Caudrelier_2013,Atkinson:2018,Dimakis2018,Dimakis2018ii,AtkNie,Rizos:2013,Kassotakis:2013,Grahovski:2016,Mikhailov2016,Kouloukas:2018,Kassotakis:2019,Rizos:2019b}.
The notion of Lax matrices for Yang-Baxter maps was introduced in \cite{Veselov:2003b} and is in accordance for Lax pairs of bond systems given in Definition \ref{dfn111}.

\begin{dfn}
(i) $L(x,y;\alpha,\lambda)$ is called a Lax matrix of the Yang-Baxter map
$R: (x_1,y_1,\alpha_1;x_2,y_2,\alpha_2)\mapsto (x_1',y_1',\alpha_1;x_2',y_2',\alpha_2)$, if the relation
$$R((x_1,y_1),(x_2,y_2))=((x_1',y_1'),(x_2',y_2'))$$
 implies that
\begin{equation}
L({x_2,y_2};\alpha_2,\lambda)\, L({x_1,y_1};\alpha_1,\lambda) =
L(x'_1,y_1;\alpha_1,\lambda)\, L({x'_2,y'_2};\alpha_2,\lambda)\,,
\label{eq:2-fact}
\end{equation}
for all $\lambda$.
$L(x,y;\alpha,\lambda)$ is called a strong Lax matrix of
$R$, if the converse also holds. \\
(ii) $L(x,y;\alpha,\lambda)$ satisfies the $n$-factorization property if
the identity
\begin{equation}
L(x'_n,y'_n;\alpha_n,{\lambda})\,\cdots
L(x'_2,y'_2;\alpha_{2},{\lambda})\,
L(x'_1,y'_1;\alpha_1,{\lambda}) \equiv
L(x_n,y_n;\alpha_n,{\lambda})\, \cdots
L(x_2,y_2;\alpha_{2},{\lambda})\,
L(x_1,y_1;\alpha_1,{\lambda})\,, \label{eq:3-fact}
\end{equation}
implies that ${{x}_i}' = {x}_i$, $y'_i=y_i$, $i=1,\ldots ,n$.
\end{dfn}
For all the Yang-Baxter maps presented in Appendices \ref{appA} and \ref{appB}, their associated Lax pairs can be obtained through the algorithmic procedure presented in \cite{Veselov:2003b}.
\begin{remark} \label{rem:nfact}
The $2$-factorization property of the Lax matrix $L$ corresponds to
the unitarity property of $R$, ($R_{21} \, R_{12} = \mathrm{Id}$) while the $3$-factorization property to the Yang-Baxter property.
\end{remark}

\begin{example}
The Yang-Baxter map $m^c\mathcal{H}1$ admits a strong Lax pair with the following Lax matrix
\begin{equation} \label{laxYB_H3xH1}
L(x,y;\alpha,\lambda) :=
\begin{pmatrix}
    0     & 0              & x\,y   & \lambda -\alpha - x \\
    0     & 0              & y      & -1             \\
    y     & y(\lambda-\alpha-x) & 0      &  0             \\
    1     & -x             & 0      &  0
\end{pmatrix}
\end{equation}
\end{example}

\begin{remark}
For the above Lax matrix \eqref{laxYB_H3xH1} the $n$-factorization property can be proved as follows.
The (right) null space of the linear transformation
in the LHS of (\ref{eq:3-fact}), for $\lambda=\alpha_1$, is two dimensional and spanned by the vectors
$$ n_1 = \left(x'_1,1,0,0\right)^t \,, \qquad n_2 = \left(0,0,1,y'_1\right)^t$$
Similarly, $\left(x_1,1,0,0\right)^t$ and $\left(0,0,1,y_1\right)^t$
span the (right) null space of the RHS linear transformation.
Because of the identity (\ref{eq:3-fact}), we conclude that $x_1={x'_1}$, $y_1={y'_1}$,
and the rightmost matrices cancel out.
Therefore, by induction, $L$ satisfies the $n$-factorization property.
\end{remark}
\begin{remark}
The $n$-factorization property for a Yang-Baxter map can be used to prove multi-dimensional compatibility for the corresponding bond system. So an alternative way to prove the 3D-compatibility of the bond systems of the $m-$list and the $a-$list, is to prove the $n$-factorization property for the associated to the bond systems Yang-Baxter maps.
\end{remark}


\end{document}